\newtheorem{mytheorem}{Theorem}
\newtheorem{mycorollary}{Corollary}
\newtheorem{myproposition}{Proposition}
\begin{document}

\title{Dropping mortality by increasing connectivity in plant epidemics}

\author{Ignacio Taguas}
\email{i.taguas@upm.es}
\author{Jos\'e A. Capit\'an}%
\altaffiliation{Complex Systems Group}
\email{ja.capitan@upm.es}
\author{Juan C. Nu\~no}%
\email{juancarlos.nuno@upm.es}
\affiliation{%
Department of Applied Mathematics. Universidad Polit\'ecnica de Madrid.\\
Av. Juan de Herrera, 6, E-28040, Madrid.
}%

\date{\today}

\begin{abstract}
Pathogen introduction in plant communities can cause serious impact and biodiversity losses that may take long time to manage and restore. Effective control of epidemic spreading in the wild is a problem of paramount importance, because of its implications in conservation and potential economic losses. Understanding the mechanisms that hinder pathogen's propagation is, therefore, crucial. Usual modelization approaches in epidemic spreading are based in compartmentalized models, without keeping track of pathogen's concentrations during spreading. In this contribution we present and fully analyze a dynamical model for plant epidemic spreading based on pathogen's abundances. The model, which is defined on top of network substrates, is amenable to a deep mathematical analysis in the absence of a limit in the amount of pathogen a plant can tolerate before dying. In the presence of such death threshold, we observe that the fraction of dead plants peaks at intermediate values of network's connectivity, and mortality decreases for large average degrees. We discuss the implications of our results as mechanisms to halt infection propagation.
\end{abstract}

\keywords{Plant epidemic spreading, deterministic dynamics, stability, random matrices.}

\maketitle


\section{Introduction}

Introductions of new plant pathogens into previously uncolonized areas is a major problem, since the feasible lack of defenses of the individuals in the area might cause fatal losses \citep{Cunniffe2016}. One example is the sudden oak death, caused by the broad host range oomycete \textit{Phytophthora ramorum}, which has caused devastating impacts on some North American and European forests \citep{Grunwald2019}. Especially, it has killed millions of oak and tanoak in California since its first detection in 1995. Another noteworthy example of the impact of plant pathogen introduction is the massive economic damage caused by \textit{Xylella fastidiosa}, a bacterium that affects 563 plant species from the Americas, Europe, the Middle East and Asia. For example, an exhaustive study of the impact of this pathogen in olive trees can be found in reference \citep{Schneider2020}.

During an epidemic outbreak, control decisions must be taken to lessen pathogen impact \citep{Anderson2004}. However, the restricted amount of time in which actions must be taken and the lack of information early in the epidemic often make the choice of action difficult \citep{Ferguson2001_2, Ferguson2001_1} and the consequences can be extremely detrimental. The previously mentioned sudden oak death is a clear example: as pointed out by \citeauthor{Cunniffe2016} \citep{Cunniffe2016}, insufficient measures taken to eradicate the disease in California has led to a point where statewide action to even slow the spread of \textit{P. ramorum} is no longer feasible; the pathogen has spread far enough that the only possible solution at present is local containment \citep{Cunniffe2016}.

Thus, developing tools that allow us to better understand the behavior of new epidemics, and that help us predict what the outcome of different control measures might be, is a question of paramount importance. The main aim of this work is the analysis of the dynamics and stability, under different conditions, of a model for the spread of infections in wild plant populations, so that it can later be applied to real epidemics.

Most epidemiological models are based on a compartmentalization of individuals according to their disease status~\citep{may1979,hens2019spatiotemporal}. This is a notable simplification, for many details of the epidemic are neglected, for example, differences in response between individuals. The study of infection propagation on networks provides a way to include in the simulations different parameters related to the shape of the field in which the epidemic occurs, such as the spatial location of the individuals and the connectivity between different individuals. It also allows a better mechanistic understanding of the spread of the epidemic \citep{Estrada2016}. Traditionally, networks have been successfully used in human and animal epidemiology; however, not much work has been done for plant epidemics \citep{keeling2005}.

Compartmentalized models like SIS, SIR, and further extensions of these, generally do not track pathogen concentrations over time. Instead, those models provide temporal variation of the fraction of individuals belonging to each class (infected, susceptible, recovered, etc.) during infection propagation over networked substrates~\citep{anderson1992infectious,keeling2005}. Clustering individuals, e.g. plants or trees, into different classes makes infection dynamics more tractable. Instead, in this contribution we focus on pathogen densities across infected individuals and their variation over time, and we introduce and fully analyze a dynamical model for infection spreading defined by density independent, per-capita fluxes of pathogen between individuals. In addition, we extract meaningful information when individuals are classified according to their response to infection. In particular, we assume that an individual is dead when pathogen's abundance increases above a certain threshold. This means that the individual is effectively removed from the network, and does not contribute to pathogen's propagation to other neighbouring plants. Individuals can tolerate moderate epidemic charges (below the death threshold). It is precisely the presence of this threshold what enriches infection dynamics ---which otherwise is amenable to analytical treatment in the absence of such a death threshold.

The effect of the substrate (mean field, network, lattice, etc.) on which epidemics take place has been extensively studied~\citep{keeling2005implications,castellano2010thresholds,cuesta2011struggle,capitan2011severe,stegehuis2016epidemic,de2016physics}. However, the influence of network average connectivity on the fractions of healthy, infected or dead individuals has been overlooked, surprisingly. Our main result is related to the number of dead individuals in the presence of a finite death threshold. We find that the fraction of dead plants peaks at intermediate network connectivity: for small connectivity, networks are basically disconnected and infection comes to a halt with a small number of dead plants. When networks are connected, increasing average degree favors channeling the pathogen across the network, the larger the mean connectivity the smaller the fraction of dead individuals. Therefore, an effective mechanism to lessen the epidemiological impact in plant communities can be the facilitation of pathogen spreading over the network ---for instance, by planting new pathogen's hosts. Such mechanism can be regarded as alternative to obstructing or limiting propagation by isolating infected individuals.

\section{Epidemic spreading dynamics}

Our approach is based on a deterministic epidemic dynamics, that unfolds on top of a network substrate, for which model parameters are random variables drawn from specified distributions. Across the network, each node represents an individual plant. Consider a plant pathogen infecting a network formed by $n$ plants or trees belonging to the same species. Let $x_i$ be the density of pathogen in plant individual $i$. When isolated (i.e., pathogen is not transported among individuals) we assume a logistic growth with rate $r>0$ for pathogen abundance in each plant. This implicitly means that there is a carrying capacity $K$ within each individual that limits pathogen abundance.

Plant individuals are accessible to host pathogen particles coming from adjacent plants. Pathogen's transport is bi-directional: infected individuals can release pathogen's particles to neighbors, and can also receive additional particles from neighboring plants. Such flux of pathogen occurs among connected individuals $i$ and $j$, for $i\ne j$. Networks are defined in a way such that two plants are connected if there is a non-zero probability of contagion between them. This probability depends on the dispersal ability of the pathogen, which is usually correlated to the distance between plant pairs. Therefore, we can assume that plants are disconnected if this probability is small ---in other words, only pairs of individuals with probability of contagion above a threshold will be connected.

Consider a focal plant $i$. If there is a link in the network between $i$ and $j$, the per-capita (per unit of pathogen abundance of the source individual) in-going flux of pathogen from individual $j$ to $i$ is denoted as $a_{ij}>0$, and the per-capita out-going flux of pathogen from $i$ to $j$ as $b_{ji}>0$. Otherwise, if $i$ and $j$ are disconnected, then $a_{ij}=a_{ji}=0$ as well as $b_{ij}=b_{ji}=0$. In addition, we set $a_{ii}=b_{ii}=0$ for $i=1,\dots,n$. Then, the amount of pathogen transported from $j$ to $i$ is equal to $a_{ij}x_j$, and the abundance transferred from $i$ to $j$ equals to $b_{ji}x_i$. Consequently, the coupled dynamics of pathogen concentrations are driven by a system of $n$ coupled differential equations,
\begin{equation}\label{eq:dyn}
\frac{dx_i}{dt} = rx_i\left(1 - \frac{x_i}{K}\right) + \sum_{j=1}^n a_{ij}x_j - x_i \sum_{j=1}^n b_{ji},
\end{equation}
for $i=1,\dots,n$. Observe that sums run over the set of neighbors of node $i$, given the restrictions imposed above for $a_{ij}$ and $b_{ij}$. The connectivity of the plant community defines two matrices, $A = (a_{ij})$ and $B = (b_{ij})$. These per-capita in- and out-going fluxes will be considered as random variables. The parameters that define pathogen's growth in isolation ($r$ and $K$) will be varied throughout this study.
Graphs are non-directed, meaning that if there is a flux of pathogen from individual $j$ to individual $i$, there is also pathogen transfer from $i$ to $j$. These fluxes, however, do not have to be balanced (this depends on the values of $a_{ij}$ and $b_{ji}$).

Observe that, if no restrictions are imposed to matrices $A$ and $B$, the amount of pathogen transferred from a node is not necessarily equal to the overall pathogen's amount that all its neighbors receive. In that case, the system is open and the overall amount of pathogen in the plant community can increase (in the presence of an external source) or decrease (if there are additional mechanisms that channel pathogen particles out of the system). Otherwise, if the amount of pathogen that is transported out of node $i$ is exactly equal to the overall amount that all of its neighbors receive, then there are no losses of pathogen during infection spreading. This condition can be easily formulated, because the total amount of pathogen that comes out from $i$ is equal to $x_i\sum_j b_{ji}$, according to previous definitions. On the other hand, a neighbor $j$ receives from node $i$ an amount of pathogen equal to $a_{ji}x_i$, so the total quantity received by neighbors is $x_i\sum_j a_{ji}$, just by summing over the neighbors of node $i$. Thus, the conditions for no pathogen losses during transport is
\begin{equation}\label{eq:cons}
\sum_{j=1}^n a_{ji} = \sum_{j=1}^n b_{ji}
\end{equation}
for $i=1,\dots,n$. If these conditions are satisfied, we say that the model is ``conservative'', because there are no losses of pathogen during transfers between individuals. It is important to observe that the system is still open and the total pathogen's abundance in the system can increase from its level at time $t=0$ until reaching the steady state ---this is due to pathogen's reproduction inside each individual.

For our model, we have defined three plant states (node compartments) depending on host's epidemic concentrations $x_i$: individuals can be healthy (pathogen concentration equal to zero), infected (pathogen abundance above zero and below a certain death threshold $\delta$), and dead (pathogen concentration above the death threshold). Therefore, the death threshold can be interpreted as the maximum pathogen amount an individual can hold to remain alive. When the death threshold is exceeded, then the plant dies and disappears from the network. At this moment, it is assumed that the complete concentration of pathogen vanishes. Thus, plant death modifies immediately network topology as well as pathogen dynamics.

\section{Qualitative analysis without mortality}
\label{sec:qua}

Before reporting simulation results obtained for different network architectures, here we briefly summarize the qualitative analysis of Eq.~\eqref{eq:dyn} in the absence of plant mortality (i.e, in the limit $\delta\to\infty$) in two   scenarios besides the conservative case defined above:
\begin{itemize}
\item[(i)] A \emph{per-capita flux balance} condition is satisfied. In this case, the overall in-going per-capita flux of a node is balanced by the overall out-going per-capita flux associated to the same node. This condition reduces, for an arbitrary node $i$, to
\begin{equation}\label{eq:balan}
\sum_{j=1}^n a_{ij} = \sum_{j=1}^n b_{ji},\qquad i=1,\dots,n.
\end{equation}
Notice the difference between~\eqref{eq:cons} and~\eqref{eq:balan}.
\item[(ii)] \emph{No restriction} is imposed in per-capita flux matrices $A$ and $B$. 
\end{itemize}
We do not make here explicit assumptions about network structure, which is implicitly contained in matrices $A$ and $B$.

Case (i) has been contemplated because it is amenable to a complete stability analysis. The proofs of the following results are provided in Appendix~\ref{sec:appA}. It can be shown that, if the network is connected, our model always exhibis two equilibrium points: $\bm{x}^{\star}=\bm{0}:=(0,\dots,0)$ and $\bm{x}^{\star}=K\bm{1}=(K,\dots,K)$, for $\bm{1}:=(1,\dots,1)$. Moreover, the first equilibrium point, which corresponds to pathogen's clearance, is unstable. The second one, associated to a full infection situation (all individuals are infected), is globally asymptotically stable, though. Therefore, in the absence of a death threshold, it is expected that all individuals within a connected component of the network will end up infected.

The stability analysis yields comparable results for the general case of arbitrary (unrestricted) per-capita pathogen fluxes between individuals. In (ii) it is not possible to compute explicit expressions for the equilibrium points. However, if matrices $A$ and $B$ are random, it is \emph{almost sure} that, for connected networks (i.e., there is a single giant component), the only equilibria observed for Eq.~\eqref{eq:dyn} correspond to either pathogen's clearance, $\bm{x}^{\star}=\bm{0}$, or full infection, $\bm{x}^{\star}=(x_i^{\star})$, with $x_i^{\star}>0$ for every node, $x_i^{\star}$ not necessarily equal. Two situations can occur, depending on network's connectivity: (a) a single equilibrium point emerges, $\bm{x}^{\star}=\bm{0}$, which is stable; and (b) both equilibrium points arise, being unstable the one associated to pathogen's clearance, and stable the one relative to pathogen's full infection. Note that the full infection equilibrium emerges and is stable when the pathogen's clearance one becomes unstable.

Although it is difficult to analytically show that the full infection equilibrium exists (case b), we numerically found that this is precisely the most probable case, especially for sparse random networks above (but close to) the percolation threshold. 
For larger network average connectivity, case (a) becomes more frequent (not dominant, though), and model realizations in which infection disappears can be observed in simulations (see Fig.~\ref{fig:B2} in Appendix~\ref{sec:appB}). Observe that the qualitative analysis summarized here for the general case (ii) applies as well for the conservative scenario defined by Eq.~\eqref{eq:cons}.

Therefore, in the absence of a death threshold, it is expected that the pathogen spreads throughout the entire network with high probability if the network is connected. Considering a maximum value in pathogen's concentration above which an individual is regarded as dead, however, makes very hard to predict which plant is to become extinct and which one will survive (although infected) starting from arbitrary initial conditions, unless resorting to the numerical integration of the system of differential equations.

\section{Plant mortality and network dynamics}

In the presence of a (finite) death threshold $\delta$, every infected plant whose epidemic charge exceeds the mortality threshold will die. When mortality occurs, automatically, the corresponding node is disconnected from the network, as well as the associated in- and out-going fluxes (which are set equal to zero). Importantly, network topology is temporally coupled with the infection propagation dynamics. Initial networks initially can change its size and topological properties due to the demise of some of the individuals during the epidemic. As we will show below, the introduction of a mortality threshold in pathogen's concentrations changes drastically the scenario portrayed in the qualitative analysis of the model without such an upper bound on tolerable epidemic charges.

In simulations, the entries of matrices $A$ and $B$ were independently and randomly drawn from a uniform distribution $U(0,1)$. The dynamical model~\eqref{eq:dyn} was integrated numerically until convergence to an equilibrium steady state. As the dynamics unfolds, a number of nodes can go extinct until no concentration exceeds the threshold across the system. Nodes that remain alive will reach the corresponding equilibrium state.

Dynamics was integrated using two network architectures: random graphs drawn from the Erd\"os-Renyi (ER)~\citep{erdos1959}, and random geometric (RG)~\citep{antonioni2012} models. In the absence of any particular knowledge about plant locations, we assume that links are drawn at random. Specifically, it is assumed that there is a uniform probability of linking two nodes, as defined by the classical ER model. The outcome of the ER model is a graph $G(n,p)$ in which $n$ is the number of nodes and $p$ is the probability that two nodes are linked ---i.e., links are placed randomly and independently between distinct pairs of nodes with probability $p$.

In reality, however, the probability of contagion is straightforwardly related to distance. As a consequence, it is natural to assume that closer nodes have a larger probability of being connected. This assumption is applied to build RG networks. A random geometric network is the simplest kind of spatial network. Nodes are embedded in a metric space, and two nodes could be connected if and only if the euclidean distance between them is in a given range ---for instance, smaller than a certain neighborhood radius, $R$~\cite{penrose2003}. Therefore, two nodes that are within the same area of influence are randomly connected as in the ER model. Nodes that are out of this area of influence are not connected.

Nodes in our RG networks are drawn uniformly on the unit square $[0,1]\times [0,1]$. We assume that connection probability explicitly depends on the distance between nodes. We used the following continuous function,
\begin{equation}\label{eq:RGprob}
p_{ij} = 
\begin{cases}
1-\left(\frac{d_{ij}}{R}\right)^2, & d_{ij}\le R,\\
0, & d_{ij} > R,
\end{cases}
\end{equation}
which means that when the distance between $i$ and $j$ is smaller than $R$, there is a non-zero probability of connection dependent on that distance (the closer the nodes are, the more chances they are connected). When the distance between pairs of nodes is above the radius, they are never connected. Samples of the RG network model with size $n$ and radius $R$ will be denoted by $G(n,R)$. Our results are not dependent of the specific functional form of connection probability given by Eq.~\eqref{eq:RGprob}.

\begin{figure*}[t!]
\centering
\includegraphics[width=1\linewidth]{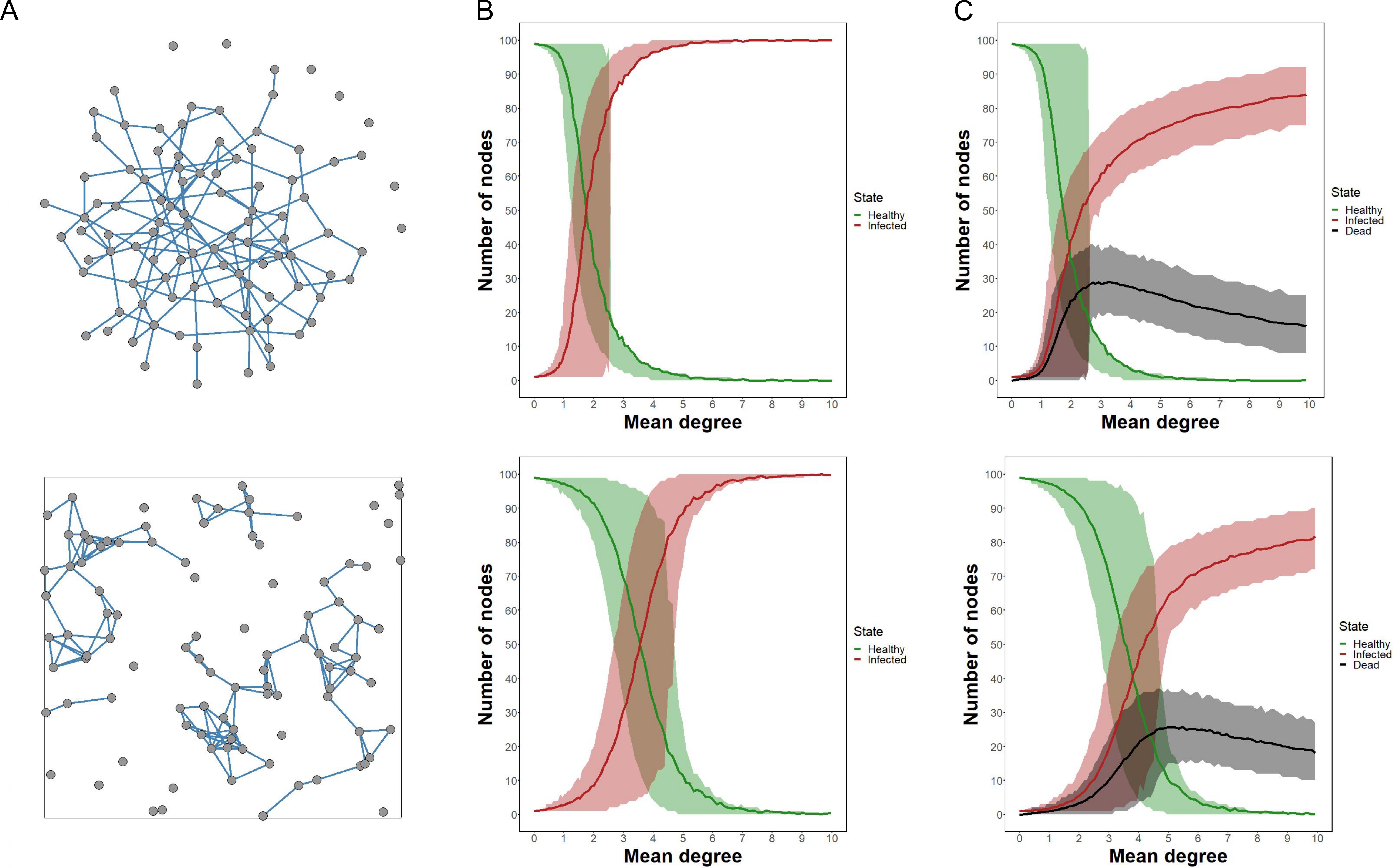}
\caption{\textbf{Number of healthy, infected and dead plants.} Upper (lower) panels correspond to ER (RG) graphs. \textbf{A}, upper panel: example of an ER network drawn from $G(100,0.03)$ with a connection probability $p=0.03$; lower panel: example of a RG network with $n=100$ and $R=0.15$ on the unit square. \textbf{B}, number of healthy (green) and infected (red) nodes in simulations carried out without a death threshold ($\delta\to\infty$) as functions of mean degrees of networks. \textbf{C}, number of healthy (green), infected (red), and dead (black) nodes in the presence of the death threshold $(\delta = 0.9K)$ as functions of initial network's mean degree. In \textbf{B} and \textbf{C} panels, the horizontal axis is calculated as the expected mean degree of the initial network ---i.e, the mean averaged degree over realized networks, which is equal to $\text{E}[k]=(n-1)p$ for ER sampled graphs. The shading areas are defined by the $10$-th and $90$-th percentiles. Here $r=0.5$ and $K=1$. 
A maximum in the number of dead nodes arises in both cases; in ER networks this happens at a mean degree of around $3$, and in RG networks at a mean degree of around $5$. Each panel was calculated by averaging over $1200$ initial conditions, networks and matrix realizations.}
\label{fig:fig1}
\end{figure*}

Simulations were conducted using networks formed by $100$ nodes. Unless the contrary is specified, we used a growth rate value $r=0.5$ and a carrying capacity value $K=1$. As for the death threshold, we set it as $\delta = 0.9K$. As initial condition, we picked up randomly a node as infected with an initial pathogen's load randomly drawn from the uniform distribution $U(0,\delta)$.

\section{Simulation results}

In Figure~\ref{fig:fig1} we report results for the infection process operating on networks with increasing connectivity. Panel A shows two samples of ER and RG network models with about $L=150$ links overall. Clearly, for that number of links, ER graphs are closer to the percolation threshold ---estimated as $p_c\approx 1/n$, \citep{estrada2012structure}---, which is evidenced by a large connected component. For $L \approx 150$, RG networks are comprised of more isolated nodes and smaller components. The radius above which a giant component arises in RG graphs is estimated as $R_c\approx \sqrt{\log n/(n\pi)}$ \citep{penrose2003}. Panel B shows how the number of healthy and infected nodes vary as functions of the mean degree when plant mortality is not considered (i.e., when the death threshold goes to infinity), whereas panel C reports the same results for the number of healthy, infected, and dead nodes versus mean degree in the presence of a finite death threshold. We observe in both panels that ER and RG curves are similar, but displaced to the right for RG graphs. 

As shown in Appendix~\ref{sec:appB}, when the network is connected, we expect that every node is infected (at least in the range of average degrees reported in Fig.~\ref{fig:fig1}), in the absence of plant mortality. For mean degrees well above the percolation threshold, this is what we observe in panel B. On the other hand, if the network is not connected, it is apparent that pathogen's spreading will not progress in those components that were not initially infected. It is only close to the percolation threshold of these networks that the curve of healthy individuals starts declining. This is consistent with the expected degree $\text{E}[k]$ at the percolation threshold ($k_i$ stands for the degree of node $i$). For an ER graph $G(n,p)$, $\text{E}[k]\approx (n-1)p$ \citep{estrada2012structure}, which is of the order of unity at $p=p_c\approx 1/n$. For a RG graph $G(n,R)$, $\text{E}[k]\approx n\pi R^2$~\citep{penrose2003}, which reduces to $\text{E}[k]\approx \log n$ at $R=R_c \approx \sqrt{\log n/(n\pi)}$. Therefore, for $n=100$ we expect the transition at mean degrees about $4.6$ for RG networks and about $1$ for ER networks. The decline of the number of healthy plants in both cases is observed at mean degrees consistent with these estimations (Fig.~\ref{fig:fig1}). This analysis suggest that network connectivity is an important driver of the overall outcome of infection spreading.

Plant mortality arises for finite values of the death threshold $\delta$. According to Fig.~\ref{fig:fig1}, the number of dead nodes peaks at a mean degree value around $\text{E}[k]\approx 3$ for ER graphs and around $\text{E}[k]\approx 5$ for RG networks. We observe this maximum in mortality independently of the values taken by the growth rate, carrying capacity, and death threshold. These parameters values do affect the location of the maximum, though. The same phenomenon occurs in the conservative scenario for ER graphs, see Figure~\ref{fig:fig1a}. Conservation of transported pathogen in RG networks yields similar results.

\begin{figure}[t!]
\centering
\includegraphics[width=\linewidth]{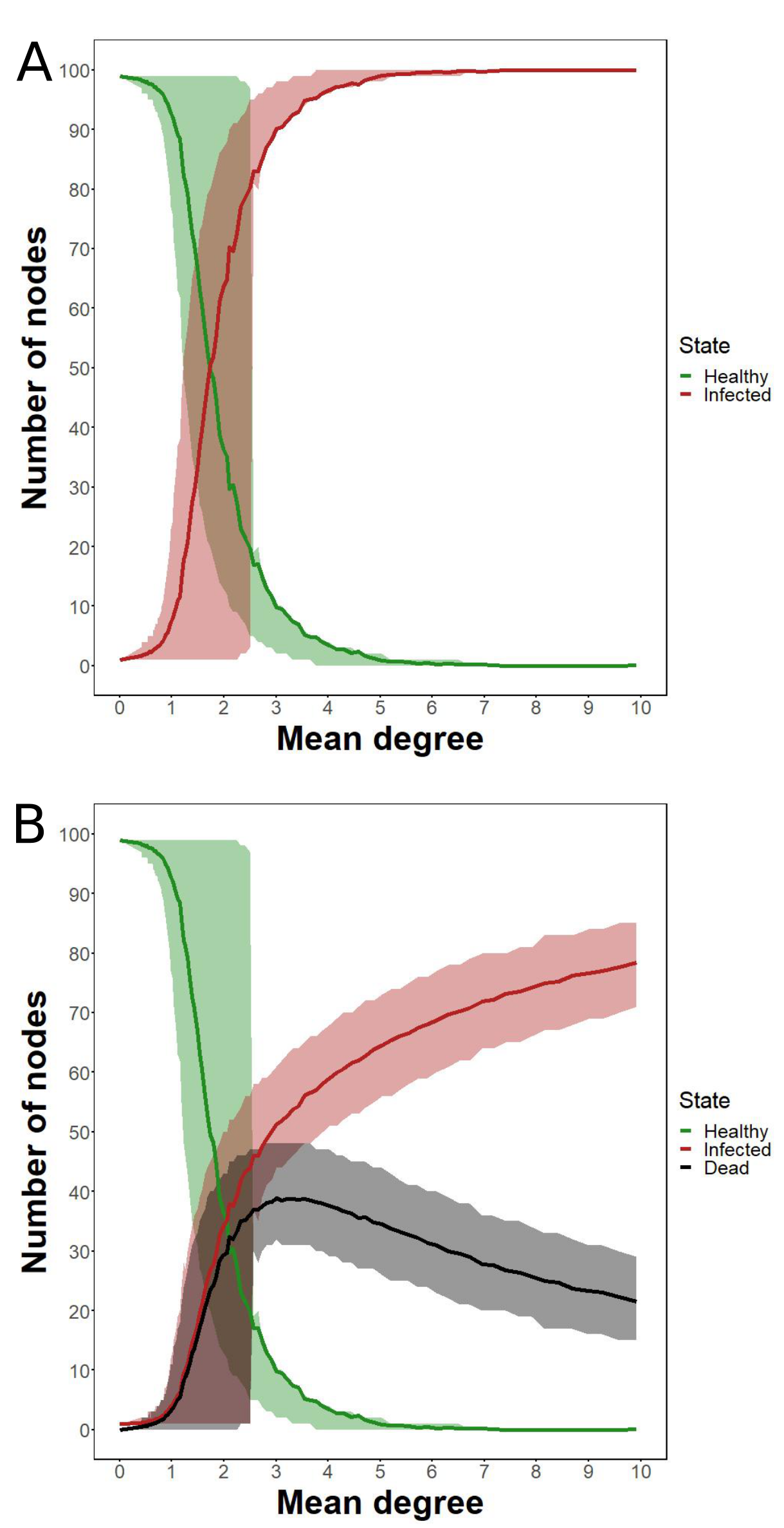}
\caption{\textbf{Conservative model.} Here we reproduce panels \textbf{B} and \textbf{C} of Fig.~\ref{fig:fig1} for the dependence of the number of individuals within compartments (healthy, infected, or dead) with ER network's mean degree, when conservation of pathogen ---Eq.~\eqref{eq:cons}--- is imposed in the dynamics.}
\label{fig:fig1a}
\end{figure}

These maxima in the number of dead plants seem counter-intuitive. One might expect that the higher the connectivity of the network, the lower the number of dead plants. However, there is an intermediate network connectivity with the highest death rate.
To test whether these maxima were due to the networks being disconnected for lower mean degrees, we repeated these simulations for small-world networks (Appendix~\ref{sec:appC}). We used the Watts-Strogatz (WS) small-world model~\citep{Watts1998}, that introduces link rewiring of regular networks yielding graphs formed by a single connected component. In this case, we do not observe a maximum in the number of dead nodes for any mean degree; plant mortality decreases monotonically as the mean degree increases. The reason is that, in small-world networks, no matter how low the mean degree is, the network is always connected. Infection can spread throughout the network, and for larger degrees the epidemic load can be channeled out through a higher number of ways, thus reducing the asymptotic pathogen's abundance of each node and, consequently, not exceeding the death threshold.

Once the network is connected, or has just a few isolated nodes, if nodes are connected on average with a higher number of neighbors, pathogen particles can be transported more efficiently between plants and the overall pathogen's charge is distributed across the network, leading to more individuals with concentrations below the threshold $\delta$. It is difficult for the pathogen to reach the death threshold in a given individual, for there are many out-going fluxes to many other individuals, swiftly distributing the pathogen.

This hypothesis can be confirmed in Figure~\ref{fig:fig2}, which shows the relationship between network connectivity and pathogen's temporal dynamics. We have considered ER graphs with three connectivity values: (i) below the maximum in the number of death nodes ($\text{E}[k]=1$), (ii) at the maximum ($\text{E}[k]=3$), and (iii) well above the maximum ($\text{E}[k]=8$). For these cases we have obtained the distributions of the size of the connected component which the initially infected node belongs to, as well as the size distribution of the largest connected component that ends up fully infected ---note that, according to our analytical results, once the process has relaxed and every survivor has abundance below the death threshold in an infected component, we expect that the full component will remain infected until reaching the steady state (this is illustrated in Fig.~\ref{fig:fig2}, lower panels). As the initially infected node is in a larger component, we observe that the temporal dynamics leads to increasing death events. However, for mean degrees leading to fully connected networks, the average pathogen's load per individual is smaller, because pathogen particles are more evenly distributed due to network's higher connectivity. Indeed, for low mean degree ($\text{E}[k]=1$), the size of the initially infected component is small, and the network is initially broken into pieces, so infection is able to progress only within the initial group and the expected number of deaths is low. For intermediate mean degree ($\text{E}[k]=3$), the initially infected component is large, but the epidemic process breaks the network and, as a consequence, the size of the largest component that ends up fully infected lowers. For large connectivity ($\text{E}[k]=8$), almost every plant belongs to the initially infected component. Such connectivity allows for an effective distribution of the overall pathogen's load across nodes, many of them remaining below the death threshold. The initial component's size in this case decreases because dead nodes are removed from the initial cluster. Hence, the maxima in mortality are explained: once a network is connected, higher connectivity implies lower mortality rates, due to the distribution of pathogen over a larger number of nodes as the epidemic spreads.

\begin{figure*}[t!]
\centering
\includegraphics[width=0.92\linewidth]{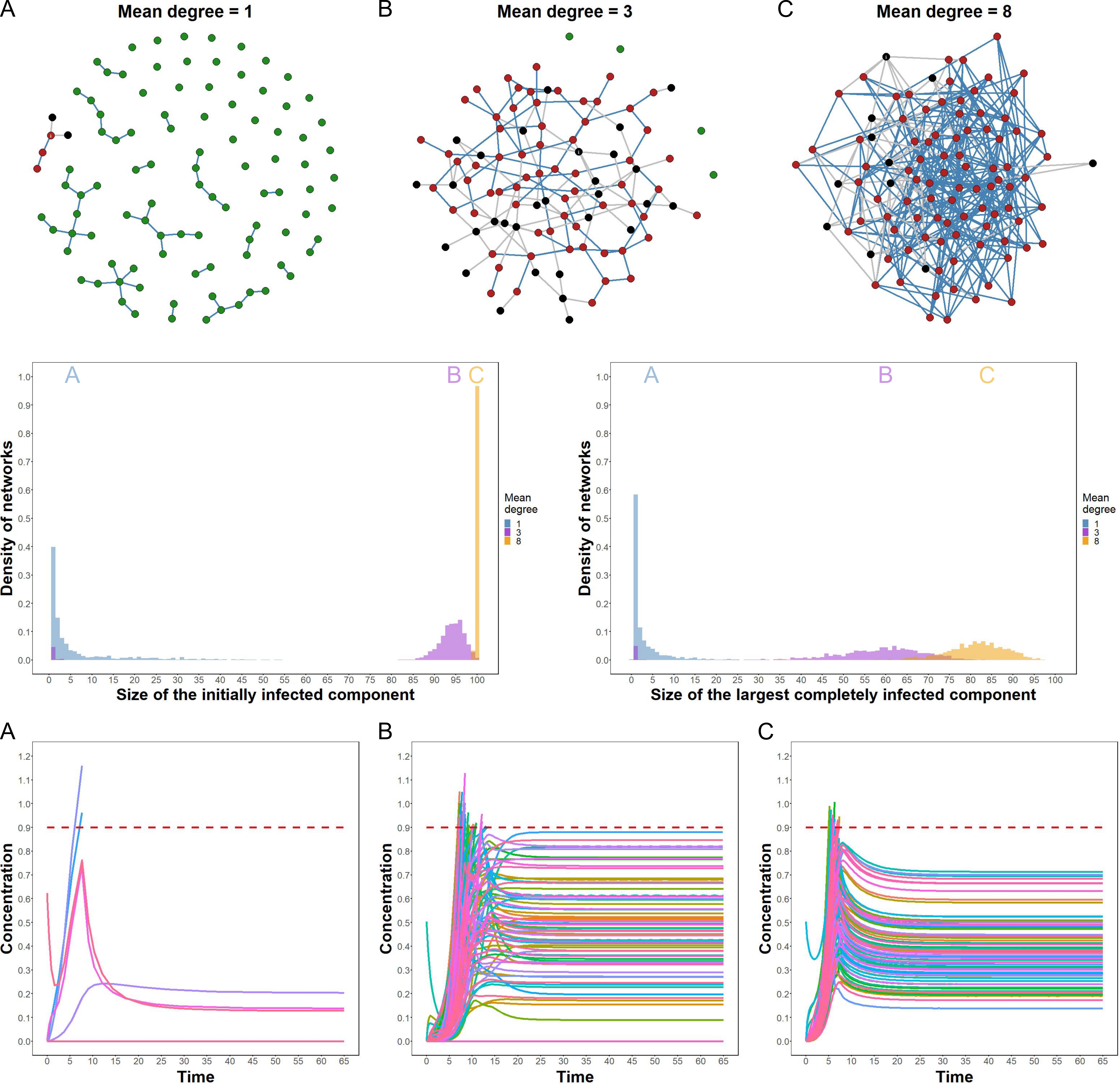}
\caption{\textbf{Random network connectivity and pathogen's temporal dynamics.} Panels \textbf{A}, \textbf{B}, and \textbf{C} correspond to ER networks with mean degrees $\text{E}[k]=1$, $\text{E}[k]=3$, and $\text{E}[k]=8$, respectively. Upper panels depict the endstate of sampled networks after the disease has spread. Healthy (green), infected (red), and dead (black) nodes are represented. Blue links remain after reaching the endstate (links connecting dead nodes are colored in grey). In all cases, in the initial state of the network only one node is infected. As for the endstate, in \textbf{A} the fractions $(H,I,D)$ of healthy, infected, and dead plants are $(H,I,D)=(0.95,0.03,0.02)$. In \textbf{B} and \textbf{C} we end up with fractions $(H,I,D)=(0.04,0.66,0.3)$ and $(H,I,D)=(0,0.89,0.11)$, respectively. The left middle panel shows the distribution of the size of the connected component which the initially infected node belongs to ($1200$ realizations for each mean degree). Blue, purple, and orange histograms correspond to $\text{E}[k]=1$, $\text{E}[k]=3$, and $\text{E}[k]=8$, respectively. The right middle panel depicts the size distribution of the largest completely infected component at the end of simulations, showing that networks break up once the dynamics unfolds. Lower panels show the temporal variation of each node pathogen's concentration. The dashed line indicates the death threshold value. The location of the distributions, which correlates to existence of a giant component, is put into correspondence with temporal dynamics, leading to a small fraction of infections and deaths (\textbf{A}), a maximum number of dead nodes (\textbf{B}), and the distribution of pathogen's loads across the network, leading to fewer deaths (\textbf{C}).}
\label{fig:fig2}
\end{figure*}

To see how remaining model parameters affect plant mortality, we set the carrying capacity to $K=1$ and varied the death threshold ($\delta \in [0.1,1]$) and the growth rate ($r\in [0.1,5]$). The percentage of dead nodes as function of the growth rate and the death threshold are shown in the heat map represented in Figure~\ref{fig:fig3}.

\begin{figure*}[t!]
\centering
\includegraphics[width=0.9\linewidth]{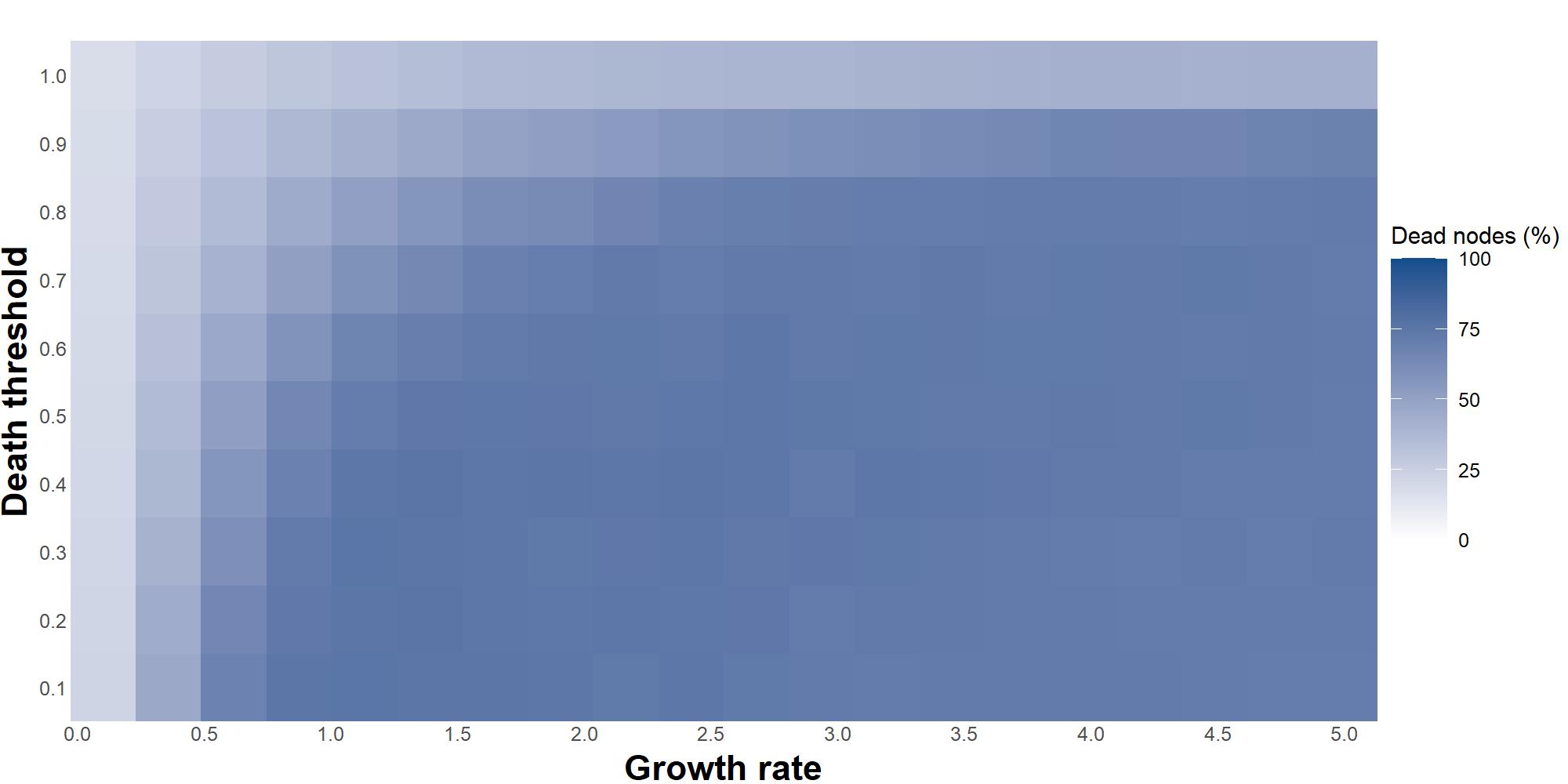}
\caption{\textbf{Fraction of dead nodes for variable death threshold and growth rate.} For fixed growth rates, mortality decreases as the death threshold increases. When the death threshold is constant, the fraction of dead nodes augments for increasing growth rates until reaching a rather constant plateau. This can occur sharply or smoothly depending on the values of the death threshold.}
\label{fig:fig3}
\end{figure*}

For fixed values of the growth rate, the number of dead nodes decreases as the death threshold increases. This is an intuitive result, since a higher death threshold implies a lesser fraction of killed plants. When the death threshold is equal to the carrying capacity ($K=1$), the number of death nodes for any growth rate is significantly lower than with any other death threshold.

If the death threshold is fixed, mortality increases as the growth rate augments until reaching an approximately constant value that depends on $\delta$. Such increase can be sharper (smoother) for smaller (larger) values of $\delta$. This is due to the fact that, when the ratio between growth rate and death threshold is large enough, nodes die too fast, disconnecting the network and hindering the spread of the disease. In summary, mortality becomes more pronounced for large pathogen's growth rates and small values of the maximum pathogen's concentration that an individual can tolerate.

\section{Discussion}

In this work we have presented and fully analyzed a mathematical model of pest dynamics, acting on plant communities, that keeps track of pathogen's concentration across individuals over time. Our approach considers a death threshold, above which plants die. We have shown that the introduction of this threshold modifies substantially the infection dynamics. Contrary to the case where no threshold exists, individual deaths modify network topology over time, because individuals whose pathogenic charge exceeds the threshold are no longer connected. A counter-intuitive result turns out: networks with higher connectivity yield a larger fraction of individuals that survive. We can explain the phenomenon using a channeled flux analog: if the percolation threshold has been crossed over and there exists a giant component, for equal pathogen growth rate and death threshold values, a higher connectivity enables a better drainage of pathogen particles across the network, which precludes a larger number of individuals from reaching their death threshold. What remains is a network formed by infected individuals that are resistant to the pathogen in the long run.

As for the robustness of our main result, we have used different network structures (ER, RG) and implemented different conditions referred to potential pathogen's losses or gains during transportation (balanced fluxes, conservative approach, unrestricted fluxes). The maximum in mortality arises as long as the network changes from being disconnected to connected ---we do not find the maximum in model networks that have no percolation transition, such as in the WS model. Therefore, the incidence of the epidemic across a population is determined by network structure. As far as we know, fatality rates in plant epidemics have not been previously connected with network average connectivity. In addition, we varied model parameters (pathogen's growth rate and death threshold) and ---not surprisingly--- we found that restrictive death thresholds and highly reproductive pathogens yield the highest fractions of dead nodes. 

Finally, we have used random values for per-capita fluxes of pathogen between individuals. Random matrices represent an appropriate framework to model large systems for which it is difficult to infer actual interaction strengths, as well as to provide consistent predictions about diversity and stability~\citep{servan2018coexistence,allesina2012stability}.

Except for the position of the maximum, which in RG networks takes place at larger mean degrees, we did not find remarkable differences with the dynamics on ER networks regarding fatality and infectivity. The RG model is more realistic because nodes are embedded in a plane, as plants in communities, and neighboring relations between them are better defined depending on distance. However, these features are not determinant when it comes to quantify mortality rates during epidemic spreading.

Temporal dynamics shown in Figure~\ref{fig:fig2} illustrate the appearance of the maximum in mortality. When a giant component starts forming, most of the nodes get infected, resulting in the propagation of the disease. However, since the mean degree is not very high, it takes longer to reach all the nodes (see Fig.~\ref{fig:fig2}B, lower panel). Pathogen grows inside infected hosts and take longer to spread, so a higher number of plants can effectively cross over the death threshold, increasing mortality numbers. Larger connectivity (above the maximum), however, implies that dead nodes reach the threshold in a more synchronous way (Fig.~\ref{fig:fig2}C, lower panel). Because connectivity is large, the network remains connected and the qualitative analysis provided in Section~\ref{sec:qua} ensures that a full infection equilibrium state (for the plants that remain alive) will be reached with high probability. 

Countless studies based on compartmentalized models have focused on finding thresholds in effective spreading rates above which infection progresses and reaches every node (see~\citep{pastor2015epidemic} and references therein). For SIS models on heterogeneous networks, the mean field epidemic threshold is known to be equal to $\langle k\rangle / \langle k^2\rangle$, where $\langle \cdot \rangle$ stands for an average value~\citep{pastor2002epidemic}. Such thresholds do not arise in our model because pathogen can be effectively spread across the giant component. For that reason, our results have to be recast in terms of how pathogen distributes depending on network connectivity.

Our work helps infer effective mechanisms that can hinder, or even halt, epidemic spreading across plant communities. For communities in which plants are scattered, it is hard that the infection propagates among disconnected network components, at least, in short time scales. Obviously, long-range dispersal can occur and uncolonized components can become infected in the long run. However, in diverse communities, packed individuals are accessible to pathogen transfers. Probably local containment to refrain propagation is not effective in highly connected systems. Our results suggest that a plausible mechanism to alleviate pathogen charges across communities is precisely increasing the connectivity by planting new individuals. For example, \textit{Viburnum acerifolium}~\citep{plantDB} is a shrub that coexists with the Californian oak, and both are infected by \textit{P. ramorum}~\citep{werres2001phytophthora}. Shrub individuals grow fast compared to trees and could alleviate oak's pathogen levels shortly after being planted, if oak's epidemic spreading could be effectively approximated by the mechanisms that drive our model dynamics. New individuals will ``drain'' pathogen particles from the species to be protected. Although infection may pervade the whole system, larger connectivity values will lower pathogen's loads, which will be more tolerable for plants targeted by conservation strategies. Thanks to pathogen reduction measures like this, infected individuals, ultimately, will survive until recovery protocols are available.



\appendix

\section{Stability results for per-capita flux balance}
\label{sec:appA}

In this section we provide proofs for the results stated in Section~\ref{sec:qua} for the per-capita flux balanced dynamics. We first calculate the attractors of the dynamics.

\begin{mytheorem}\label{the:1}
Assume that the graph $M$ associated to matrices $A$ and $B$ is connected, and that per-capita fluxes are balanced, i.e., Eq.~\eqref{eq:balan} holds. Then the only equilibrium points of~\eqref{eq:dyn} are either $x^{\star}_1=\dots=x^{\star}_n=0$ or $x^{\star}_1=\dots=x^{\star}_n=K$.
\end{mytheorem}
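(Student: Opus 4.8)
The plan is to combine a discrete maximum--minimum principle with the connectivity hypothesis. First I would verify directly that the two proposed points are equilibria: substituting $x^\star=\mathbf 0$ annihilates every term, while substituting $x^\star=K\mathbf 1$ leaves $rK(1-1)+K\sum_j a_{ij}-Ks_i$, which vanishes by the balance condition $\sum_j a_{ij}=\sum_j b_{ji}=:s_i$. Throughout I would work in the nonnegative orthant, which is forward invariant because $x_i=0$ gives $\dot x_i=\sum_j a_{ij}x_j\ge 0$; hence the only biologically relevant equilibria are nonnegative.

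Next, the core of the argument is to evaluate the equilibrium condition at the extremal nodes. Let $m$ achieve the maximum $x_m$ and $\ell$ the minimum $x_\ell$ of a fixed equilibrium. Since $a_{mj}\ge 0$ and $x_j\le x_m$, one has $rx_m(1-x_m/K)=\sum_j a_{mj}(x_m-x_j)\ge 0$; if $x_m>0$ this yields $x_m\le K$, and if $x_m=0$ every component vanishes and we are at $\mathbf 0$. Symmetrically, $x_j\ge x_\ell$ gives $rx_\ell(1-x_\ell/K)=\sum_j a_{\ell j}(x_\ell-x_j)\le 0$, and since $x_\ell\ge 0$ either $x_\ell\ge K$ or $x_\ell=0$. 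In the first branch, $K\le x_\ell\le x_m\le K$ forces $x^\star=K\mathbf 1$; the remaining possibility is that the minimum is zero.

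The decisive step---and the one where connectivity is indispensable---is to show that a zero minimum forces the all-zero equilibrium. Let $Z=\{i:x_i^\star=0\}$, assumed nonempty. For $i\in Z$ the equilibrium equation collapses to $\sum_j a_{ij}x_j=0$; as every summand is nonnegative, $a_{ij}x_j=0$ for all $j$, so any neighbor $j$ of $i$ (for which $a_{ij}>0$ by the network definition) must satisfy $x_j=0$, i.e.\ $j\in Z$. Thus no edge of the graph $M$ leaves $Z$. Were $Z$ a proper subset, connectedness of $M$ would supply an edge between $Z$ and its complement, a contradiction; hence $Z=\{1,\dots,n\}$ and $x^\star=\mathbf 0$.

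I expect the main obstacle to be precisely this last closure argument rather than the maximum principle, which is routine: one must argue that the equilibrium relation at a zero node kills all incident fluxes, and then translate ``no edge leaves $Z$'' into ``$Z$ is everything'' using connectivity. A secondary point requiring care is the sign bookkeeping at the extremal nodes, since the logistic term is only sign-definite after one rewrites it as $\sum_j a_{ij}(x_i-x_j)$ using the balance condition; handling the degenerate subcases $x_m=0$ and $x_\ell=0$ cleanly is what makes the dichotomy exhaustive.
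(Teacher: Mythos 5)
Your proof is correct and takes essentially the same approach as the paper's: both arguments evaluate the equilibrium relation at extremal nodes, use the balance condition~\eqref{eq:balan} to make the coupling terms sign-definite (the paper packages this as completing the square in the shifted variables $y_i = x_i^{\star}-K/2$ and sorting the entries), obtain the same dichotomy (all abundances equal to $K$, or some abundance equal to zero), and then invoke connectivity to propagate the zero value to the entire graph (the paper iterates along a path, where you use the cleaner closure argument that the zero set admits no outgoing edges). The only differences are presentational; the mathematical content, including the implicit restriction to the nonnegative orthant, is identical.
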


\begin{proof}
We want to solve the non-linear system of equations
\begin{equation}\label{eq:equi0}
rx^{\star}_i\left(1-\frac{x^{\star}_i}{K}\right)+\sum_{j=1}^n a_{ij}x^{\star}_j-x^{\star}_i\sum_{j=1}^n b_{ji}=0, \end{equation}
$i=1,\dots,n$. Completing squares, we can write the system as
\begin{equation}
\left(x^{\star}_i-\frac{K}{2}\right)^2=\frac{K^2}{4}+\frac{K}{r}\bigg(\sum_{j=1}^n a_{ij}x^{\star}_j-x^{\star}_i\sum_{j=1}^n b_{ji}\bigg)=0.
\end{equation}
Define the new variables $y_i:=x^{\star}_i-\frac{K}{2}$. With these new variables, the system reduces to
\begin{equation}\label{eq:equi}
y_i^2=\frac{K^2}{4}+\frac{K}{r}\bigg(\sum_{j=1}^n a_{ij}y_j-y_i\sum_{j=1}^n b_{ji}\bigg)=0.
\end{equation}
Without loss of generality, we can sort the entries of vector $\bm{y}$ by relabelling equations and write $-\frac{K}{2}\le y_1 \le y_2 \le \dots \le y_n$. Then~\eqref{eq:equi} for $i=1$ yields
\begin{equation}\label{eq:y1}
\begin{aligned}
y_1^2&=\frac{K^2}{4}+\frac{K}{r}\bigg(\sum_{j=1}^n a_{1j}y_j-y_1\sum_{j=1}^n b_{j1}\bigg)\\
&\ge \frac{K^2}{4}+\frac{K}{r}\bigg(\sum_{j=1}^n a_{1j}-\sum_{j=1}^n b_{j1}\bigg)y_1= \frac{K^2}{4},
\end{aligned}
\end{equation}
where we have used the assumption that $y_j\ge y_1$ for $j=2,\dots,n$ and the equality of in- and out-going fluxes, cf. Eq.~\eqref{eq:balan}. From~\eqref{eq:y1} we have that either $y_1=-\frac{K}{2}$ or $y_1\ge \frac{K}{2}$. We consider these two cases separately.
\begin{itemize}
\item[(a)] Let $y_1=-\frac{K}{2}$. Since the network is connected, exists a shortest-path over the network such that node $1$ is connected to node $n$, either directly through an existing link or by a finite series of steps using intermediate nodes. Let $\{i_1,i_2,\dots,i_s\}$ denote the index sequence for the path connecting node $1$ and node $n$. Then $a_{1,i_1}>0$ and assume that $y_{i_1} > -\frac{K}{2}$. Because $\sum_{j} a_{1j}y_{j}>-\frac{K}{2}\sum_{j} a_{1j}$ (notice that the strict inequality is due the variables being sorted, which implies that $y_k > -\frac{K}{2}$ for $k>i_1$), Eq.~\eqref{eq:equi} for $i=1$ reduces to
\begin{equation}
\begin{aligned}
0&=\sum_{j=1}^n a_{1j}y_j+\frac{K}{2}\sum_{j=1}^n b_{j1} \\
&> \frac{K}{2}\bigg(-\sum_{j=1}^n a_{1j}+\sum_{j=1}^n b_{j1}\bigg)=0,
\end{aligned}
\end{equation}
which is obviously a contradiction. Therefore $y_{i_1}=-\frac{K}{2}$, which implies that $y_1=\dots=y_{i_1}=-\frac{K}{2}$.

This argument can be iterated until reaching node $n$: now consider node $i_1$, which in the path is connected to $i_2$. If $i_2>i_1$ then we apply the same argument (using that $a_{i_1,i_2}>0$ in Eq.~\eqref{eq:equi} particularized for $i=i_1$) to prove that $y_1=\dots=y_{i_2}=-\frac{K}{2}$. If $i_2<i_1$, then we iterate the procedure for the following node in the sequence, $i_3$. At the end of the path we will reach node $i_s$, connected to the endpoint at node $n$. Then Eq.~\eqref{eq:equi} for $i=i_s$, together with the fact that $a_{i_s,n}>0$, implies that $y_1=y_2=\dots=y_n=-\frac{K}{2}$ in this case. This solution reduces to $x^{\star}_1=\dots =x^{\star}_n=0$.
\item[(b)] Assume now that $y_1\ge \frac{K}{2}$. Now particularize~\eqref{eq:equi} for $i=n$. Using that $y_j\le y_n$, we can find the following upper bound for $y_n^2$:
\begin{equation}\label{eq:yn}
\begin{aligned}
y_n^2&=\frac{K^2}{4}+\frac{K}{r}\bigg(\sum_{j=1}^n a_{nj}y_j-y_n\sum_{j=1}^n b_{jn}\bigg)\\
&\le \frac{K^2}{4}+\frac{K}{r}\bigg(\sum_{j=1}^n a_{nj}-\sum_{j=1}^nb_{jn}\bigg)y_n= \frac{K^2}{4},
\end{aligned}
\end{equation}
which implies that $-\frac{K}{2}\le y_n\le \frac{K}{2}$. Since $y_1\le y_n\le\frac{K}{2}$ and $y_1\ge\frac{K}{2}$ by hypothesis, we get $y_1=y_n=\frac{K}{2}$. This obviously leads to the solution $x^{\star}_1=\dots =x^{\star}_n=K$.
\end{itemize}
This completes the proof of the theorem.
\end{proof}

Observe that this result is general for connected graphs, irrespective of the specific network structure yielded by matrix $M$. In other words, we have not made any assumption on the distribution of the links, some of which can be absent (some $a_{ij}=0$ or $b_{ij}=0$), to prove that the only equilibrium points are those associated to full infection or to the absence of the infection.

We now focus on the stability of these equilibrium points. It can be shown that the equilibrium point associated to pathogen coexistence across individuals, $\bm{x}^{\star}=K\bm{1}$, is \emph{globally asymptotically stable}. We proceed as follows: (i) first we show that $\bm{x}^{\star}=K\bm{1}$ is asymptotically stable; (ii) then we show that the other equilibrium point, $\bm{x}^{\star}=\bm{0}$, is unstable; (ii) as the origin is unstable, global stability of the coexistence equilibrium arises because the state space of feasible solutions is invariant to the dynamical system, which implies that every trajectory with positive initial conditions will converge to $\bm{x}^{\star}=K\bm{1}$.

The Jacobian matrix can be expressed as $J=D+A$, where $D$ is a diagonal matrix $D=(d_{ii})$ whose diagonal entries are given by
\begin{equation}
d_{ii}=r-\frac{2r x_i^{\star}}{K}-\sum_{j=1}^n b_{ji}.
\end{equation}
The stability of the coexistence equilibrium point follows as a corollary of the Gershgorin's circle theorem, which we reproduce here for the sake of completeness:
\begin{mytheorem}\label{the:2}
Let $S$ be a complex $n\times n$ matrix with entries $s_{ij}$. Let $R_i=\sum_{j\ne i} \vert s_{ij}\vert$ be the sum of the absolute values of the non-diagonal entries in the $i$-th row. Let $D(s_{ii},R_{i})\subseteq \mathbb {C}$ be a closed disc centered at $s_{ii}$ with radius $R_{i}$ (such a disc is called a Gershgorin disc). Then every eigenvalue of $S$ lies within at least one of the Gershgorin discs $D(m_{ii},R_{i})$.
\end{mytheorem}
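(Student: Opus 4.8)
The plan is to prove the statement by the classical eigenvector-component argument. The underlying idea is that any eigenvalue $\lambda$ of $S$ comes equipped with a nonzero eigenvector $\bm{v}$, and by examining the scalar equation $S\bm{v}=\lambda\bm{v}$ at the particular row where the eigenvector has its largest component in modulus, one can bound the distance from $\lambda$ to the diagonal entry $s_{ii}$ by the off-diagonal row sum $R_i$. This single row is enough to trap $\lambda$ inside one Gershgorin disc.

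First I would fix an arbitrary eigenvalue $\lambda$ with an associated eigenvector $\bm{v}=(v_1,\dots,v_n)\ne\bm{0}$, and choose an index $i$ for which $\vert v_i\vert=\max_{1\le k\le n}\vert v_k\vert$. Since $\bm{v}$ is nonzero, this maximum is strictly positive, a fact that will be essential at the very end of the argument. Next I would write out the $i$-th scalar equation of $S\bm{v}=\lambda\bm{v}$, namely $\sum_{j}s_{ij}v_j=\lambda v_i$, isolate the diagonal term to obtain $(\lambda-s_{ii})v_i=\sum_{j\ne i}s_{ij}v_j$, and then take absolute values on both sides.

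The decisive estimate then follows from the triangle inequality combined with the maximality of $\vert v_i\vert$: one obtains $\vert\lambda-s_{ii}\vert\,\vert v_i\vert\le\sum_{j\ne i}\vert s_{ij}\vert\,\vert v_j\vert\le\bigl(\sum_{j\ne i}\vert s_{ij}\vert\bigr)\vert v_i\vert=R_i\vert v_i\vert$, where the second inequality uses $\vert v_j\vert\le\vert v_i\vert$ for every $j$. Dividing through by $\vert v_i\vert>0$ yields $\vert\lambda-s_{ii}\vert\le R_i$, which is precisely the assertion that $\lambda$ lies in the Gershgorin disc $D(s_{ii},R_i)$.

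Since this holds for the index $i$ selected above, the eigenvalue $\lambda$ belongs to at least one disc, and because $\lambda$ was an arbitrary eigenvalue the conclusion extends to the whole spectrum of $S$. I do not anticipate a genuine obstacle here, as the result is entirely classical; the only point demanding care is the selection of the maximal-modulus component and the verification that it is strictly positive, since the final division by $\vert v_i\vert$ --- and hence the entire argument --- would collapse otherwise. The nonvanishing of the eigenvector guarantees exactly this, so the proof goes through cleanly.
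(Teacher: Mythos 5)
Your proof is correct: it is the classical eigenvector argument for Gershgorin's circle theorem --- pick the component of maximal modulus, isolate the diagonal term in that row, and apply the triangle inequality together with $\vert v_j\vert\le\vert v_i\vert$ and $\vert v_i\vert>0$. Note, however, that the paper itself offers no proof of this statement: Theorem~\ref{the:2} is quoted verbatim as the well-known Gershgorin circle theorem, ``reproduced for the sake of completeness'' purely as an ingredient for Corollary~\ref{cor:1}, so there is no paper proof to compare against; yours is the standard one and fills that gap correctly. (Incidentally, the $D(m_{ii},R_i)$ in the paper's statement is a typo for $D(s_{ii},R_i)$, which your argument implicitly corrects.)
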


Then, stability follows directly:
\begin{mycorollary}\label{cor:1}
If condition~\eqref{eq:balan} is satisfied, the equilibrium point of~\eqref{eq:dyn} associated to full infection, $x^{\star}_1=\dots=x^{\star}_n=K$, is asymptotically stable.
\end{mycorollary}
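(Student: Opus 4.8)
The plan is to evaluate the Jacobian $J = D + A$ at the coexistence equilibrium and apply Gershgorin's circle theorem (Theorem~\ref{the:2}) to confine the entire spectrum to the open left half-plane. First I would substitute $x_i^{\star} = K$ into the diagonal entries, which gives $d_{ii} = r - 2r - \sum_{j=1}^n b_{ji} = -r - \sum_{j=1}^n b_{ji}$ for every $i$. The off-diagonal entries of $J$ are inherited directly from $A$, namely $J_{ij} = a_{ij} \ge 0$ for $i \ne j$ (recall that $a_{ii} = 0$).

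Next I would compute the Gershgorin radius of the $i$-th row. Since all $a_{ij}$ are non-negative and $a_{ii} = 0$, the radius is $R_i = \sum_{j \ne i} |a_{ij}| = \sum_{j=1}^n a_{ij}$. Here the balance condition~\eqref{eq:balan} does the essential work: it lets me replace $\sum_j a_{ij}$ by $\sum_j b_{ji}$, so that $R_i = \sum_{j=1}^n b_{ji}$. Thus the $i$-th Gershgorin disc is centered at the real point $d_{ii} = -r - \sum_j b_{ji}$ and has radius exactly $\sum_j b_{ji}$, i.e. the radius is precisely the amount by which the center lies below $-r$.

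The key observation is then that the rightmost extent of every disc sits strictly left of the imaginary axis. For any complex $z$ with $|z - d_{ii}| \le R_i$ and $d_{ii}$ real, one has $\operatorname{Re}(z) \le d_{ii} + R_i = -r - \sum_j b_{ji} + \sum_j b_{ji} = -r < 0$. By Theorem~\ref{the:2} every eigenvalue of $J$ lies in at least one such disc, so every eigenvalue has real part bounded above by $-r$. Hence the spectrum of the linearization is contained in the open left half-plane, and the coexistence equilibrium $\bm{x}^{\star} = K\bm{1}$ is asymptotically stable.

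I expect the only delicate point to be the treatment of complex eigenvalues: Gershgorin localizes eigenvalues in discs rather than in real intervals, so one must check that the bound $\operatorname{Re}(z) \le d_{ii} + R_i$ is valid for complex $z$, not merely real ones. This follows from the elementary estimate $\operatorname{Re}(z) = d_{ii} + \operatorname{Re}(z - d_{ii}) \le d_{ii} + |z - d_{ii}|$, which relies on $d_{ii}$ being real, as it is here. Everything else is a direct substitution, with condition~\eqref{eq:balan} supplying the exact cancellation that pins each disc to the left of $-r$.
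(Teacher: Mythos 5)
Your proposal is correct and follows essentially the same argument as the paper: both evaluate the Jacobian $J=D+A$ at $\bm{x}^{\star}=K\bm{1}$ to get $d_{ii}=-r-\sum_j b_{ji}$, compute the Gershgorin radii $R_i=\sum_j a_{ij}$, and use the balance condition~\eqref{eq:balan} to pin the rightmost point of every disc at $-r<0$. Your extra remark about bounding $\operatorname{Re}(z)$ for complex eigenvalues is a sound (if implicit in the paper) justification of the same step.
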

\begin{proof} 
The Jacobian matrix, in this case, reduces to $J=D+A$ with $d_{ii}=-r-\sum_j b_{ji}$. The Gershgorin's circle theorem helps show that all the eigenvalues of $J$ have strictly negative real part. Indeed, for each row of $J$, the radius of the $i$-th disc is $R_i=\sum_j a_{ij}$ because $a_{ij}\ge 0$. Therefore each disc $D\big(-r-\sum_j b_{ji},\sum_j a_{ij}\big)$ is centered on the real axis, and each one is contained in the complex semi-plane $\text{Re}\,z \le -r$, because the right-most point of the disc is the real number $-r-\sum_j b_{ji}+\sum_j a_{ij}=-r$. Thus, any eigenvalue satisfies $\text{Re}\,\lambda \le -r$ and has strictly negative real part. Hence the equilibrium point is asymptotically stable.
\end{proof}

Again, this result is independent of whether matrices $A$ and $B$ have an adjacency matrix $M$ superimposed or not. It is general as long as the condition of per-capita flux balance, Eq.~\eqref{eq:balan}, holds.

On the other hand, we can show that the equilibrium point associated to pathogen's clearance, $\bm{x}^{\star}=\bm{0}$, is unstable. We state this as a proposition:
\begin{myproposition}\label{prop:1}
If condition~\eqref{eq:balan} holds, the equilibrium point of~\eqref{eq:dyn} associated to full pathogen extinction, $x^{\star}_1=\dots=x^{\star}_n=0$, is unstable.
\end{myproposition}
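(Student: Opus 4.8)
The plan is to show that the Jacobian of~\eqref{eq:dyn} evaluated at the origin possesses an eigenvalue with strictly positive real part; by the principle of linearized stability this immediately rules out asymptotic stability and establishes that $\bm{x}^\star=\bm{0}$ is unstable.

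First I would write down the Jacobian at the origin, reusing the decomposition $J=D+A$ already introduced above. At $\bm{x}^\star=\bm{0}$ the contribution $-2rx_i^\star/K$ vanishes, so the diagonal matrix $D$ has entries $d_{ii}=r-\sum_{j=1}^n b_{ji}$, while the off-diagonal part of $J$ is simply $A$ (recall $a_{ii}=0$). The decisive step is then to invoke the per-capita flux balance condition~\eqref{eq:balan}, which allows me to replace $\sum_j b_{ji}$ by $\sum_j a_{ij}$ in each diagonal entry, giving $d_{ii}=r-\sum_j a_{ij}$.

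With this substitution, computing the $i$-th row sum of $J=D+A$ yields $d_{ii}+\sum_{j\ne i}a_{ij}=\bigl(r-\sum_j a_{ij}\bigr)+\sum_j a_{ij}=r$, where I used $a_{ii}=0$. Hence every row of $J$ sums to the same value $r$, which is exactly the statement that the all-ones vector $\bm{1}=(1,\dots,1)$ is a right eigenvector, $J\bm{1}=r\bm{1}$. Since $r>0$ is a real eigenvalue of $J$ with positive real part, the equilibrium associated to pathogen's clearance is unstable.

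The main subtlety is recognizing that one cannot simply transplant the Gershgorin argument of Corollary~\ref{cor:1}. There the discs were pushed into the left half-plane to bound \emph{all} eigenvalues; here the very same discs, now centered at $r-\sum_j b_{ji}$ with radius $\sum_j a_{ij}$, only deliver the upper bound $\text{Re}\,\lambda\le r$, which says nothing about the existence of an unstable direction. The only nontrivial ingredient is therefore the observation that flux balance forces the row sums to be constant and equal to $r$, so that $\bm{1}$ furnishes an \emph{explicit} eigenvector with the desired positive eigenvalue, sidestepping any need for disc estimates.
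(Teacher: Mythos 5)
Your proof is correct and takes essentially the same route as the paper: both use the flux-balance condition~\eqref{eq:balan} to show that the Jacobian at the origin has constant row sums equal to $r$, so that $\bm{1}$ is an explicit eigenvector with eigenvalue $r>0$, which forces instability. Your closing remark explaining why the Gershgorin argument of Corollary~\ref{cor:1} cannot simply be reused is accurate, but it is commentary rather than a new ingredient.
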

\begin{proof}
The Jacobian matrix has diagonal entries given by $d_{ii}=r-\sum_j b_{ji}$. Then trivially the vector $\bm{1}^T$ is an eigenvector of $J$ with eigenvalue $r$, because $\sum_{j} J_{ij}=d_{ii}+\sum_{j}a_{ij}=r+\sum_j a_{ij} -\sum_j b_{ji}=r$. As $J$ has constant row-sums, then $\bm{1}$ is eigenvector with eigenvalue equal to each row sum. Because there is an eigenvalue with strictly positive real part, this point is unstable, as stated.
\end{proof}

To finish with the qualitative analysis, we need to rule out the possibility that trajectories do not cross the boundaries of the space of feasible solutions, $x_i=0$, $i=1,\dots,n$. But it is easy to check that any trajectory starting with initial condition $\bm{x}(0)$ in the interior of the space $\mathbb{R}^n_+=\{\bm{x}\in \mathbb{R}^n \vert x_i\ge 0, i=1,\dots, n\}$ remains in that space, i.e., the space of feasible solutions is invariant. We state this as a proposition:
\begin{myproposition}
The state space of feasible solutions associated to Eq.~\eqref{eq:dyn}, $\mathbb{R}^n_+$, is invariant.
\end{myproposition}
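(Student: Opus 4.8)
The plan is to verify the standard Nagumo-type tangency condition on the boundary of the positive orthant. Positive invariance of a closed set under a smooth flow holds whenever, at every boundary point, the vector field does not point strictly outward; for the orthant $\mathbb{R}^n_+$ this reduces to checking, face by face, that $\dot{x}_i\ge 0$ whenever $x_i=0$ and the remaining coordinates are nonnegative. First I would restrict attention to the face $\{x_i=0\}$ and substitute $x_i=0$ into the $i$-th equation of~\eqref{eq:dyn}. The logistic term $rx_i(1-x_i/K)$ and the outflow term $x_i\sum_j b_{ji}$ both vanish identically, leaving
\begin{equation}
\left.\frac{dx_i}{dt}\right|_{x_i=0}=\sum_{j=1}^n a_{ij}x_j.
\end{equation}
Because every $a_{ij}\ge 0$ and every $x_j\ge 0$ on the face, this quantity is nonnegative, so the field points into $\mathbb{R}^n_+$ (or is tangent) on each bounding hyperplane.

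Having established the tangency inequality, I would convert it into genuine invariance. The cleanest route is to view the $i$-th equation, along a fixed trajectory, as a scalar linear equation in $x_i$ with a nonnegative forcing term,
\begin{equation}
\frac{dx_i}{dt}=g_i(t)\,x_i+\sum_{j=1}^n a_{ij}x_j(t),\qquad g_i(t):=r-\frac{r\,x_i(t)}{K}-\sum_{j=1}^n b_{ji},
\end{equation}
and apply the variation-of-constants formula, which expresses $x_i(t)$ as $x_i(0)$ times a positive exponential factor plus an integral of the nonnegative inflow $\sum_j a_{ij}x_j(s)$ against a positive kernel. Note that the forcing $\sum_j a_{ij}x_j$ does not involve $x_i$ itself, since $a_{ii}=0$, so this representation is legitimate. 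It makes manifest that $x_i(t)$ cannot become negative as long as every coordinate has stayed nonnegative up to time $t$.

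To close the argument I would run a continuity/bootstrap step: let $T$ be the supremum of times up to which every coordinate has remained nonnegative, and suppose $T$ is finite within the maximal interval of existence. By continuity all $x_j(T)\ge 0$, and the integral representation above then forces each $x_i(T)\ge 0$, so the trajectory lies in $\mathbb{R}^n_+$ at $T$; the tangency inequality prevents any coordinate from crossing strictly below zero immediately afterwards, contradicting the maximality of $T$. I expect the only delicate point to be this final step, since trajectories are permitted to graze the boundary $\{x_i=0\}$ without leaving the orthant, and several coordinates may touch the boundary at once. Invoking Nagumo's theorem directly, or phrasing the bootstrap through the scalar quantity $\min_i x_i(t)$ and a one-sided (Dini) derivative estimate, sidesteps the technical issue of simultaneous boundary contacts.
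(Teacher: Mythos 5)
Your proposal is correct and takes essentially the same approach as the paper: both arguments rest on the tangency computation that on the face $x_i=0$ the vector field reduces to $\dot{x}_i=\sum_{j} a_{ij}x_j\ge 0$, so trajectories cannot leave the orthant through any bounding hyperplane. The paper stops at this informal Nagumo-type observation (``the derivative is non-negative, so the flux does not allow the trajectory to cross $x_k=0$''), whereas you additionally supply the variation-of-constants representation and the bootstrap/Dini-derivative closure, which simply makes rigorous the step the paper leaves implicit.
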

\begin{proof}
First consider the initial condition $\bm{x}(0)=\bm{0}$. Since $\bm{x}=\bm{0}$ is an equilibrium point, the trajectory remains in $\mathbb{R}^n_+$.

Now consider an initial condition such that $x_i(0)\ge 0$ for all $i=1,\dots,n$ and some of the initial values verify $x_j(0)>0$. Then it is easy to see that $x_i(t)\ge 0$ for all $t>0$ and  all $i=1,\dots,n$. Assume that some variable $x_k(t_a)=0$ vanishes at $t=t_a>0$. At that time it holds that
\begin{equation}
\left.\frac{dx_k}{dt}\right\vert_{t=t_a}=\sum_{s=1}^n a_{ks}x_s(t_a)\ge 0.
\end{equation}
Because the derivative is non-negative, the flux of the ODE system does not allow the trajectory to cross the axis $x_k=0$. The same holds for the remaining variables. Hence any initial condition $\bm{x}(0)\in\mathbb{R}^n_+$ yields a trajectory contained in the state space of feasible solutions.
\end{proof}

This proposition, together with corollary~\ref{cor:1} and proposition~\ref{prop:1} yields the following corollary:
\begin{mycorollary}
If condition~\eqref{eq:balan} is satisfied, the equilibrium point of~\eqref{eq:dyn} associated to full infection, $x^{\star}_1=\dots=x^{\star}_n=K$, is globally asymptotically stable.
\end{mycorollary}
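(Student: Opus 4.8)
The plan is to combine the three facts already in hand --- that $K\bm1$ is locally asymptotically stable (Corollary~\ref{cor:1}), that $\bm0$ and $K\bm1$ are the \emph{only} equilibria (Theorem~\ref{the:1}), and that $\mathbb{R}^n_+$ is forward invariant --- with the observation that Eq.~\eqref{eq:dyn} generates a \emph{cooperative} (order-preserving) flow, and then to pinch an arbitrary trajectory between two explicit comparison trajectories that both converge to $K\bm1$. The cooperativity is immediate: for $i\ne j$ one has $\partial f_i/\partial x_j=a_{ij}\ge 0$, so the off-diagonal part of the Jacobian is exactly $A$. Since the graph is connected, $A$ is irreducible and the induced flow $\phi_t$ is strongly monotone, whence any trajectory starting at a nonzero point of $\mathbb{R}^n_+$ becomes strictly positive for every $t>0$.

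Next I would build the comparison trajectories along the diagonal. Evaluating the vector field at $c\bm1$ and invoking the balance condition Eq.~\eqref{eq:balan}, the transport terms cancel and leave
\begin{equation}
f_i(c\bm1)=rc\Big(1-\frac{c}{K}\Big),\qquad i=1,\dots,n,
\end{equation}
which is strictly positive for $0<c<K$ and strictly negative for $c>K$. Fix $0<\varepsilon<K<C$. Because $f(\varepsilon\bm1)\gg\bm0$, cooperativity forces $\phi_t(\varepsilon\bm1)$ to be nondecreasing in every component; comparison with the fixed point $K\bm1$ (using $\varepsilon\bm1\le K\bm1$) bounds it above by $K\bm1$, so it converges, and by Theorem~\ref{the:1} its limit is an equilibrium --- necessarily $K\bm1$, since an increasing positive orbit cannot limit to $\bm0$. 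Symmetrically, $\phi_t(C\bm1)$ is nonincreasing and stays above $K\bm1$, hence also converges to $K\bm1$.

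Finally I would squeeze. Given any $\bm x(0)\in\mathbb{R}^n_+\setminus\{\bm0\}$, strong monotonicity supplies a time $t_0$ with $\bm x(t_0)\gg\bm0$; choosing $\varepsilon$ small and $C$ large so that $\varepsilon\bm1\le\bm x(t_0)\le C\bm1$, order-preservation of the flow gives $\phi_t(\varepsilon\bm1)\le\phi_t(\bm x(t_0))\le\phi_t(C\bm1)$ for all $t\ge0$. Both bounding trajectories tend to $K\bm1$, so $\bm x(t)\to K\bm1$ by the sandwich; together with local asymptotic stability this is global asymptotic stability on the invariant set $\mathbb{R}^n_+$. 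The main obstacle is not any single estimate but the justification of the monotone-systems machinery: verifying that $f(\bm x_0)\ge\bm0$ propagates to a componentwise nondecreasing orbit, that a bounded monotone orbit has a single equilibrium as its $\omega$-limit, and that irreducibility delivers the strict positivity needed to insert the $\varepsilon\bm1$ lower barrier. The cancellation producing the clean logistic sign above rests entirely on Eq.~\eqref{eq:balan}, so it is precisely the flux-balance hypothesis that makes these diagonal comparison functions available.
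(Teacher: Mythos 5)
Your proof is correct, but it takes a genuinely different and substantially more rigorous route than the paper. The paper's own proof is a two-line assertion: since $\mathbb{R}^n_+$ is invariant, the only equilibria are $\bm{0}$ (unstable) and $K\bm{1}$ (asymptotically stable by Corollary~\ref{cor:1}), it concludes that every trajectory in $\mathbb{R}^n_+\setminus\{\bm{0}\}$ must converge to $K\bm{1}$. Strictly speaking, that inference has a gap: invariance plus a unique locally stable equilibrium does not by itself exclude unbounded orbits or non-equilibrium limit sets (e.g.\ periodic orbits), so the paper's ``trivially'' is doing real work. Your argument closes exactly this gap. You exploit the structure the paper never invokes --- the system is cooperative since $\partial f_i/\partial x_j=a_{ij}\ge 0$ for $i\ne j$, and irreducible because the graph is connected --- and the flux-balance condition~\eqref{eq:balan} gives the clean diagonal evaluation $f_i(c\bm{1})=rc\bigl(1-\tfrac{c}{K}\bigr)$, which is indeed correct (the transport terms cancel precisely under~\eqref{eq:balan}, not under~\eqref{eq:cons}). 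The monotone-systems pipeline then works as you describe: $f(\varepsilon\bm{1})\gg\bm{0}$ yields a nondecreasing orbit bounded above by the fixed point $K\bm{1}$, a bounded monotone orbit converges to an equilibrium, Theorem~\ref{the:1} identifies it as $K\bm{1}$, the symmetric argument handles $C\bm{1}$, strong monotonicity inserts any nonzero initial condition strictly inside the order interval after finite time, and the sandwich (which also guarantees global existence via boundedness) delivers global attractivity; with Corollary~\ref{cor:1} this is global asymptotic stability. What each approach buys: the paper's is short and matches the informal tone of the appendix, but it is an assertion rather than a proof; yours imports standard but nontrivial machinery (Kamke--M\"uller conditions, convergence of monotone bounded orbits, strong monotonicity from irreducibility, which you should cite, e.g.\ Hirsch or Smith's monograph) and in exchange actually rules out the pathological behaviors the paper silently dismisses.
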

\begin{proof}
This follows trivially because the state space $\mathbb{R}^n_+$ is invariant and the unique stable equilibrium point is $\bm{x}^{\star}=K\bm{1}$. Hence all trajectories will converge to $\bm{x}^{\star}=K\bm{1}$ and its basin of attraction will be $\mathbb{R}^n_+-\{\bm{0}\}$, i.e., the full state space except the unstable equilibrium point.
\end{proof}

\section{Stability results for the general case}
\label{sec:appB}

The behavior described in Appendix~\ref{sec:appA} for the flux-balanced model is recovered \emph{almost surely} if no restrictions are imposed in matrices $A$ and $B$. Regarding system's equilibria, it is always found the one associated to pathogen's clearance, $\bm{x}^{\star}=\bm{0}$. According to the results provided below in this Appendix, it is very likely that an equilibrium point associated to full infection, $\bm{x}^{\star}$ with $x_i^{\star}>0$ for $i=1,\dots,n$, exists. But, if the network is connected, no equilibria can arise that combine infected and healthy individuals \emph{almost surely}:
\begin{myproposition}\label{prop:3}
Consider the dynamics~\eqref{eq:dyn} with unrestricted, random per-capita flux matrices $A$ and $B$ (with i.i.d. entries). If the adjacency matrix $M$ defines a connected graph, almost surely no equilibrium points $\bm{x}^{\star}$ exist such some pathogen abundances are positive and some of them are exactly equal to zero.
\end{myproposition}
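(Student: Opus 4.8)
The plan is to fix a candidate support, reduce the existence of a mixed equilibrium to an \emph{overdetermined} polynomial system, and then use the independence of the random entries to show that the surplus equations fail almost surely. Suppose a mixed equilibrium $\bm{x}^{\star}$ exists, and partition the nodes as $P=\{i:x_i^{\star}>0\}$ and $Z=\{i:x_i^{\star}=0\}$, both nonempty. For $i\in P$, the equilibrium condition~\eqref{eq:equi0} involves only the positive unknowns, since $x_j^{\star}=0$ for $j\in Z$; it reads $-\frac{r}{K}(x_i^{\star})^2+(r-\beta_i)x_i^{\star}+\sum_{j\in P}a_{ij}x_j^{\star}=0$ with $\beta_i:=\sum_j b_{ji}$. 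These are $|P|$ quadratic equations in the $|P|$ unknowns $\{x_j^{\star}\}_{j\in P}$, depending only on the block $A_{PP}=(a_{ij})_{i,j\in P}$ and on $\{\beta_i\}_{i\in P}$. For $i\in Z$, equation~\eqref{eq:equi0} collapses to the linear constraint $\sum_{j\in P}a_{ij}x_j^{\star}=0$. Thus a mixed equilibrium with this support must satisfy $|P|$ quadratic equations together with $|Z|\ge 1$ extra linear constraints, in only $|P|$ unknowns, and genericity should forbid this.

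First I would show that the quadratic subsystem indexed by $P$ has finitely many complex solutions. This is automatic from its structure: the top-degree part of the $i$-th equation is $-\frac{r}{K}(x_i)^2$, so the leading forms share only the trivial common projective zero (no point of $\mathbb{P}^{|P|-1}$ has all coordinates vanishing). Hence there are no solutions at infinity and, by B\'ezout, at most $2^{|P|}$ affine solutions. In particular there are finitely many candidate solutions $\bm{\xi}^{(1)},\dots,\bm{\xi}^{(L)}$ with all positive components (the only ones that can yield a mixed equilibrium), each with $\xi_{j}^{(\ell)}>0$ for all $j\in P$.

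Next I would exploit connectedness and independence to eliminate the surplus constraints. Since $M$ is connected and $P,Z$ are nonempty, some link joins an $i_0\in Z$ to a $j_0\in P$, so $a_{i_0 j_0}$ is a genuine (nonzero-pattern) random entry. The key point is that none of the $\bm{\xi}^{(\ell)}$ depends on $a_{i_0 j_0}$, because the $P$-subsystem involves only $A_{PP}$ and the $\beta_i$, and $i_0\notin P$. Conditioning on every entry of $A$ and $B$ except the single scalar $a_{i_0 j_0}$ (legitimate by independence), the finite set $\{\bm{\xi}^{(\ell)}\}$ and all remaining coefficients in the $i_0$-constraint are frozen, and that constraint becomes
\begin{equation}
a_{i_0 j_0}\,\xi_{j_0}^{(\ell)}=-\sum_{j\in P,\,j\ne j_0}a_{i_0 j}\,\xi_j^{(\ell)},
\end{equation}
a single prescribed value for $a_{i_0 j_0}$, since $\xi_{j_0}^{(\ell)}>0$. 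As $a_{i_0 j_0}$ has a nonatomic distribution, it equals any fixed value with probability zero; a union bound over the finitely many $\ell$ keeps the conditional probability zero, and Fubini removes the conditioning. Hence almost surely no candidate satisfies even the single constraint at $i_0$, so no mixed equilibrium with support $(P,Z)$ exists. A final union bound over the at most $2^n$ choices of $(P,Z)$ yields the claim.

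The main obstacle is the independence bookkeeping in the last step: one must check that the data fixing the candidate equilibria (the block $A_{PP}$ and the column sums $\beta_i$ of $B$) is disjoint from, hence independent of, the cross-entry $a_{i_0 j_0}$ used to violate the constraint, and that the candidate set is genuinely finite so the union over $\ell$ is harmless. Both hold: $a_{i_0 j_0}$ lies outside $A_{PP}$ (its row index is in $Z$) and outside $B$, while the diagonal-quadratic structure of the $P$-subsystem forces finiteness unconditionally. I emphasize that a purely deterministic sign argument would suffice were the fluxes known to be nonnegative, since then $\sum_{j\in P}a_{i_0 j}x_j^{\star}=0$ would be contradicted directly by $a_{i_0 j_0}x_{j_0}^{\star}>0$; the probabilistic argument above is precisely what is needed to preclude cancellations once the entries are merely random, which is the \emph{unrestricted} setting of the proposition.
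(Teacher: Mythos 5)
Your proof is correct, and it takes a genuinely different route from the paper's. The paper collapses the $|Z|$ surplus constraints~\eqref{eq:rank} into the single statement that the $Z\times P$ block $A_1$ of $A$ annihilates the positive vector $\bm{x}^{\star}_P$, asserts this would force $A_1$ to be rank-deficient, and invokes an almost-sure full-rank theorem for random matrices. You instead fix the support, prove the $P$-subsystem has finitely many complex solutions (B\'ezout, since the leading forms $-\frac{r}{K}x_i^2$ have no common projective zero, so there are no solutions at infinity), and then kill one crossing constraint by conditioning on all entries except the single entry $a_{i_0 j_0}$ and using nonatomicity, with union bounds over the finite candidate set and the $2^n$ supports. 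Your version buys two things. First, it supplies the independence bookkeeping that the paper's rank argument leaves implicit: a nonzero kernel vector does not contradict full rank when $|Z|<|P|$ (a $k\times(n-k)$ matrix with $k<n-k$ is generically of full row rank yet has a nontrivial kernel), so the paper's deduction, read literally, only bites when $|Z|\ge |P|$; what rescues the general case is exactly your observation that the candidates $\bm{\xi}^{(\ell)}$ are determined by $A_{PP}$ and the column sums of $B$, hence independent of the crossing entry, and your finiteness lemma is what licenses the union bound over candidates. Second, your argument works for any nonatomic i.i.d.\ entry distribution and needs only one $Z$--$P$ edge, rather than a full-rank theorem that must respect the zero pattern of $A_1$. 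Your closing remark is also on point: in the paper's actual model the $a_{ij}$ are positive on edges, so the deterministic bound $\sum_{j\in P} a_{i_0 j}x^{\star}_j \ge a_{i_0 j_0}x^{\star}_{j_0} > 0$ settles the matter outright, and the probabilistic machinery (yours or the paper's) is only essential if entries may take both signs.
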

\begin{proof}
Assume, without loss of generality, that the first $k$ abundances of $\bm{x}^{\star}$ are equal to zero, $x_1^{\star}=\dots=x_k^{\star}=0$, and the remaining ones are positive, $x_i^{\star}>0$ for $i=k+1,\dots,n$. Then~\eqref{eq:equi0} reduces, for $i=1,\dots,k$, to
\begin{equation}\label{eq:rank}
\sum_{j=k+1}^n a_{ij}x_j^{\star} = 0.
\end{equation}
Let $A_1$ be the submatrix of $A$ formed by the columns from $j=k+1$ to $j=n$ and the rows from $i=1$ to $i=k$. Then if Eq.~\eqref{eq:rank} was true, this would imply that the random matrix $A_1$ is not full rank. But this is a contradiction because any (connected) random matrix is full rank ---see Corollary 1.2 in~\citep{feng2007rank}.

Observe that, if the graph were disconnected, then we could find equilibria with non-zero entries within one or more connected components and zero entries in other components. Eq.~\eqref{eq:rank} would not impose any restriction because matrix elements $a_{ij}=0$ between disconnected components. For example, if infection spreading starts in a node within a component, at the end this component will reach a fully infected state, whereas the remaining components will have healthy individuals.
\end{proof}

Therefore, we can expect \emph{almost surely} that the only equilibrium points of~\eqref{eq:dyn} with unrestricted, random per-capita flux matrices $A$ and $B$, are either $\bm{x}^{\star}=\bm{0}$ or $\bm{x}^{\star}$ with all entries $x_i^{\star}>0$ for $i=1,\dots,n$, if the latter is found as solution of~\eqref{eq:equi0}. This is similar to what we found analytically in Appendix~\ref{sec:appA} for the balanced case. In principle, more than one single equilibrium associated to full infection with all $x_i^{\star}>0$ could arise. In practice, the majority of realizations should exhibit only the two aforementioned equilibria, as in the balanced case. Observe that this applies to the conservative case, for it being a particular case of the unrestricted flux situation.

Can we say something about the stability of these two equilibria? The following proposition holds:
\begin{myproposition}\label{prop:4}
Let $\bm{x}^{\star}=(x_i^{\star})$ be a solution of~\eqref{eq:equi0}, and
\begin{equation}\label{eq:jac}
J(\bm{x}^{\star}) = \emph{diag}\biggl(r-\frac{2r x_i^{\star}}{K}-\sum_{j=1}^n b_{ji}\biggr)+A
\end{equation}
be the Jacobian matrix evaluated at that equilibrium point ---$\emph{diag}(\bm{u})$ stands for a diagonal matrix defined by vector $\bm{u}$. Then, if the full infection equilibrium $\bm{x}^{\star}$ ($x_i^{\star}>0$) exists, it holds that 
\begin{equation}\label{eq:Jx}
(J(\bm{x}^{\star})\bm{x}^{\star})_i=-\frac{r{x_i^{\star}}^2}{K},
\end{equation}
for $i=1,\dots,n$. Moreover, if the full infection equilibrium arises, at the full pathogen's clearance equilibrium we find that
\begin{equation}\label{eq:J0}
(J(\bm{0})\bm{x}^{\star})_i=\frac{r{x_i^{\star}}^2}{K},
\end{equation}
for $i=1,\dots,n$ and $\bm{x}^{\star}$ the full pathogen infection equilibrium point.
\end{myproposition}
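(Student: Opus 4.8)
The plan is to verify both identities by direct componentwise evaluation of the Jacobian acting on the equilibrium vector, using the equilibrium condition~\eqref{eq:equi0} to eliminate the flux terms. Writing $J(\bm{x}^{\star})=D+A$ with $D$ the diagonal matrix of~\eqref{eq:jac}, the key preliminary step is to rearrange~\eqref{eq:equi0} into the identity
\begin{equation}
\sum_{j=1}^n a_{ij}x_j^{\star}-x_i^{\star}\sum_{j=1}^n b_{ji}=-rx_i^{\star}+\frac{r{x_i^{\star}}^2}{K},
\end{equation}
which holds at any solution of the equilibrium equations and packages the coupling between nodes into a purely local expression in $x_i^{\star}$.

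For the first identity I would expand $(J(\bm{x}^{\star})\bm{x}^{\star})_i$ as $\bigl(r-2rx_i^{\star}/K-\sum_j b_{ji}\bigr)x_i^{\star}+\sum_j a_{ij}x_j^{\star}$ and substitute the rearranged equilibrium relation above. The linear terms $rx_i^{\star}$ cancel, the two logistic contributions $-2r{x_i^{\star}}^2/K$ and $+r{x_i^{\star}}^2/K$ combine, and what remains is exactly $-r{x_i^{\star}}^2/K$, which is~\eqref{eq:Jx}. For the second identity the computation is identical except that the diagonal of the Jacobian is now evaluated at the origin, so the term $2rx_i^{\star}/K$ is absent; after the same substitution the surviving logistic term carries the opposite sign, giving $+r{x_i^{\star}}^2/K$ and hence~\eqref{eq:J0}.

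The argument is elementary, so there is no genuine obstacle; the only point demanding care is the factor of two in the diagonal entry, which arises from differentiating the logistic term $rx_i(1-x_i/K)$ and is precisely what produces the sign flip between~\eqref{eq:Jx} and~\eqref{eq:J0}. I would also emphasize that no hypothesis on the structure of $A$ and $B$ or on the connectivity of the graph is needed: both identities follow solely from $\bm{x}^{\star}$ satisfying~\eqref{eq:equi0}, so that~\eqref{eq:J0} remains meaningful even though $\bm{0}$ and $\bm{x}^{\star}$ are distinct equilibria and the Jacobian of the former is being applied to the abundance vector of the latter.
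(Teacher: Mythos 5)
Your proof is correct and follows essentially the same route as the paper: a direct componentwise expansion of $(J\bm{x}^{\star})_i$ combined with the equilibrium condition~\eqref{eq:equi0} to eliminate the coupling terms (the paper splits $-2r{x_i^{\star}}^2/K$ and recognizes the vanishing equilibrium combination, while you substitute the rearranged equilibrium identity — an algebraically equivalent bookkeeping choice). Your closing remark that neither identity requires any hypothesis on $A$, $B$, or connectivity beyond the existence of the full infection equilibrium is also consistent with the paper's argument.
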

\begin{proof}
It is easy to compute that
\begin{multline}
(J(\bm{x}^{\star})\bm{x}^{\star})_i=rx_i^{\star}-\frac{2r{x_i^{\star}}^2}{K}-
x_i^{\star}\sum_{j=1}^n b_{ji}+(A\bm{x}^{\star})_i\\
=-\frac{r{x_i^{\star}}^2}{K}+rx_i^{\star}-\frac{r{x_i^{\star}}^2}{K}
-x_i^{\star}\sum_{j=1}^n b_{ji}+(A\bm{x}^{\star})_i,
\end{multline}
but in the last equality all the terms except the first vanish because $\bm{x}^{\star}$ is a solution of~\eqref{eq:equi0}. Similarly,
\begin{equation}
\begin{aligned}
(J(\bm{0})\bm{x}^{\star})_i&=rx_i^{\star}-
x_i^{\star}\sum_{j=1}^n b_{ji}+(A\bm{x}^{\star})_i = \frac{r{x_i^{\star}}^2}{K}\\
&+rx_i^{\star}-\frac{r{x_i^{\star}}^2}{K}-x_i^{\star}\sum_{j=1}^n b_{ji}+(A\bm{x}^{\star})_i,
\end{aligned}
\end{equation}
and we get~\eqref{eq:J0}.
\end{proof}

\begin{figure}[t!]
\centering
\includegraphics[width=0.85\linewidth]{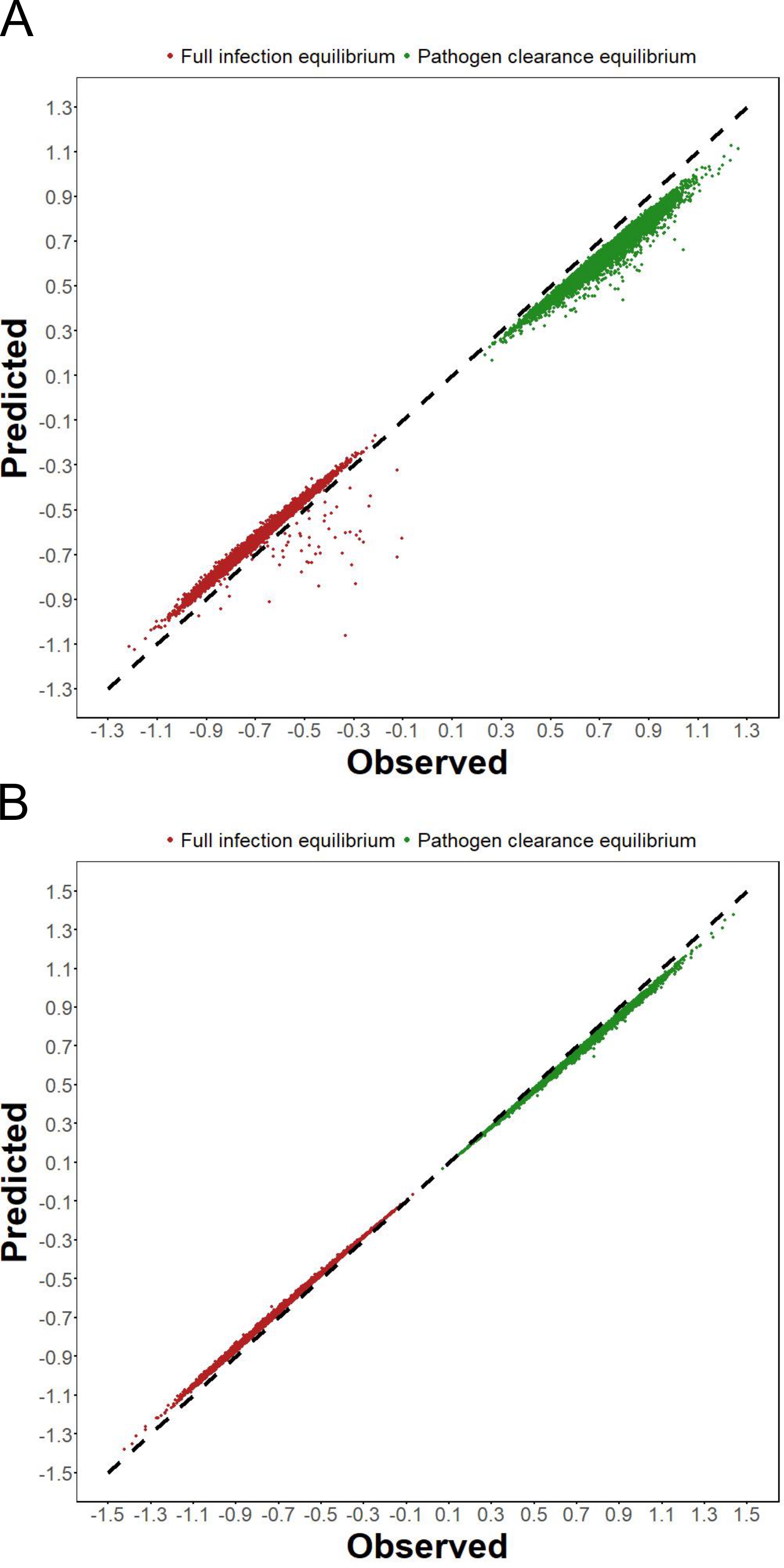}
\caption{\textbf{Predicted and observed eigenvalues.} Observed rightmost eigenvalue of the Jacobian matrix~\eqref{eq:jac} at $\bm{x}^{\star}=\bm{0}$, together with predicted values given by Eq.~\eqref{eq:J0} (green dots). Red dots stand for predicted ---cf.~Eq.~\eqref{eq:Jx}--- \emph{vs.} observed stability eigenvalues at the full infection equilibrium point, in realizations where both equilibria coexist. A total of $10^4$ Erd\"os-Renyi model networks realizations ($n=100$ nodes) were taken to produce both panels. {\bf A} corresponds to $p=0.1$, and {\bf B} was obtained for $p=0.2$, both values well above the percolation threshold. As $p$ approaches one, the agreement between predicted and observed becomes almost perfect.}
\label{fig:B1}
\end{figure}

Expressions~\eqref{eq:Jx} and~\eqref{eq:J0} can be used to provide accurate approximations of the rightmost eigenvalue of the Jacobian matrix for both equilibria. 
%
If $\bm{v}=(v_i)$ was an eigenvector of $J$, we would have that $(J(\bm{x}^{\star})\bm{v})_i = \lambda v_i$. If all entries are real and $v_i>0$, then we could write $\lambda$ as
\begin{equation}
\lambda = \frac{1}{n}\sum_{i=1}^n \frac{(J(\bm{x}^{\star})\bm{v})_i}{v_i}.
\end{equation}
Numerically we find that the eigenvector $\bm{v}$ associated with the eigenvalue that determines stability can be approximated by $\bm{x}^{\star}$, if this full infection equilibrium point exists. Then Equations~\eqref{eq:Jx} and~\eqref{eq:J0} can be written as
\begin{equation}
\frac{1}{n}\sum_{i=1}^n\frac{(J(\bm{x}^{\star})\bm{x}^{\star})_i}{x_i^{\star}}
=-\frac{r}{K}\overline{\bm{x}^{\star}}
\end{equation}
and
\begin{equation}
\frac{1}{n}\sum_{i=1}^n\frac{(J(\bm{0})\bm{x}^{\star})_i}{x_i^{\star}}
=\frac{r}{K}\overline{\bm{x}^{\star}},
\end{equation}
respectively. Here $\overline{\bm{v}}:=\frac{1}{n}\sum_{i=1}^n v_i$. If the full infection equilibrium exists and the approximation for the eigenvector $\bm{v} \approx \bm{x}^{\star}$ is correct, then the right-most eigenvalue can be approximated by $-\frac{r}{K}\overline{\bm{x}^{\star}}$ for the full infection equilibrium (which will be stable), and by $\frac{r}{K}\overline{\bm{x}^{\star}}$ for the full pathogen clearance (which, as a consequence, will be unstable). Figure~\ref{fig:B1} shows the goodness of such approximations for ER networks with different connectivity values. We observe that the approximation works better for increasing values of the connection probability $p$, yielding an almost perfect agreement in the limit $p\to 1$.
%

\begin{figure}[t!]
\centering
\includegraphics[width=\linewidth]{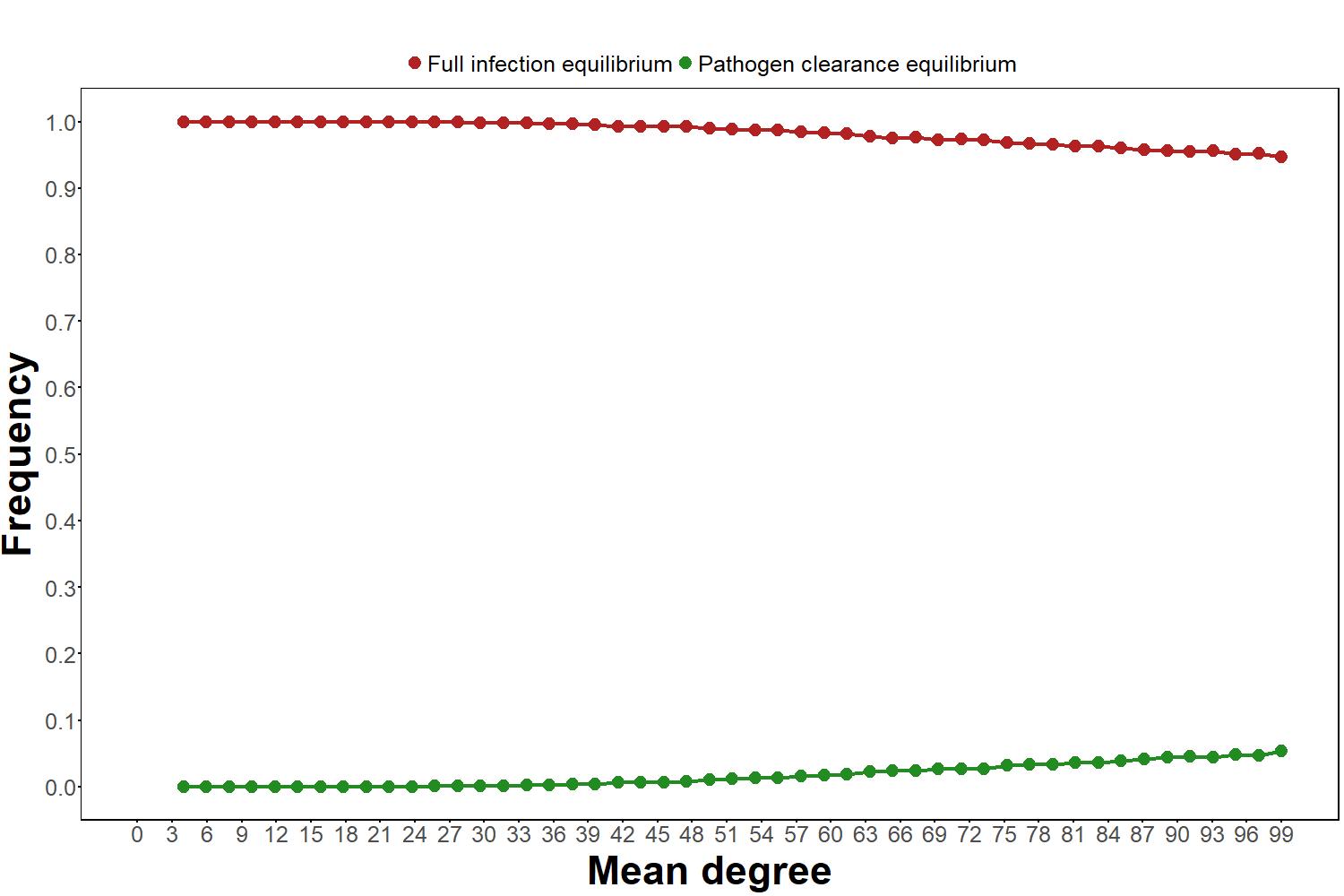}
\caption{\textbf{Probability of stability} of the full infection and pathogen's clearance equilibria as function of the expected mean degree. Ten thousand ER model realizations ($n=100$) were calculated for each connection probability $p$, and we estimated probabilities as observed frequencies for each equilibria. As $\text{E}[k]=(n-1)p$ increases, the equilibrium $\bm{x}^{\star}=\bm{0}$ becomes more likely to be asymptotically stable.}
\label{fig:B2}
\end{figure}

We conclude that, in situations where these two equilibrium points appear, the one associated to full infection is stable, and the one associated to pathogen's clearance is not. In summary, the qualitative behavior of critical points in the general case is very similar to the case of per-capita flux balance.

To finish the qualitative analysis of the general model, we have studied the effect of network connectivity in ER graphs on the stability of the two equilibria that can appear in this case. We provide the results of this exploration in Figure~\ref{fig:B2}. 

We observe that, for small mean degrees (in particular, for those reported in Fig.~\ref{fig:fig1}), the full infection equilibrium is always stable and the pathogen's clearance is unstable. However, for largely connected networks, in some cases the latter equilibrium becomes stable ---in which cases the coexistence equilibrium does not exist. This is more apparent for fully connected networks, for which about a $5\%$ of the realizations yield pathogen's clearance as endstate. It is worth mentioning that the sum of the two probabilities is numerically equal to one, so we did not find additional equilibria across all the realizations.

\section{Small-world networks}
\label{sec:appC}

In this contribution, we have shown that mortality peaks at intermediate network connectivity values, and this phenomenon can be ascribed to the potential fragmentation of the network as infection spreads. However, not every model of random network formation exhibits a percolation transition due to the formation of a giant component. According to our results, in this case the maxima should not appear.

\begin{figure}[t!]
\centering
\includegraphics[width=\linewidth]{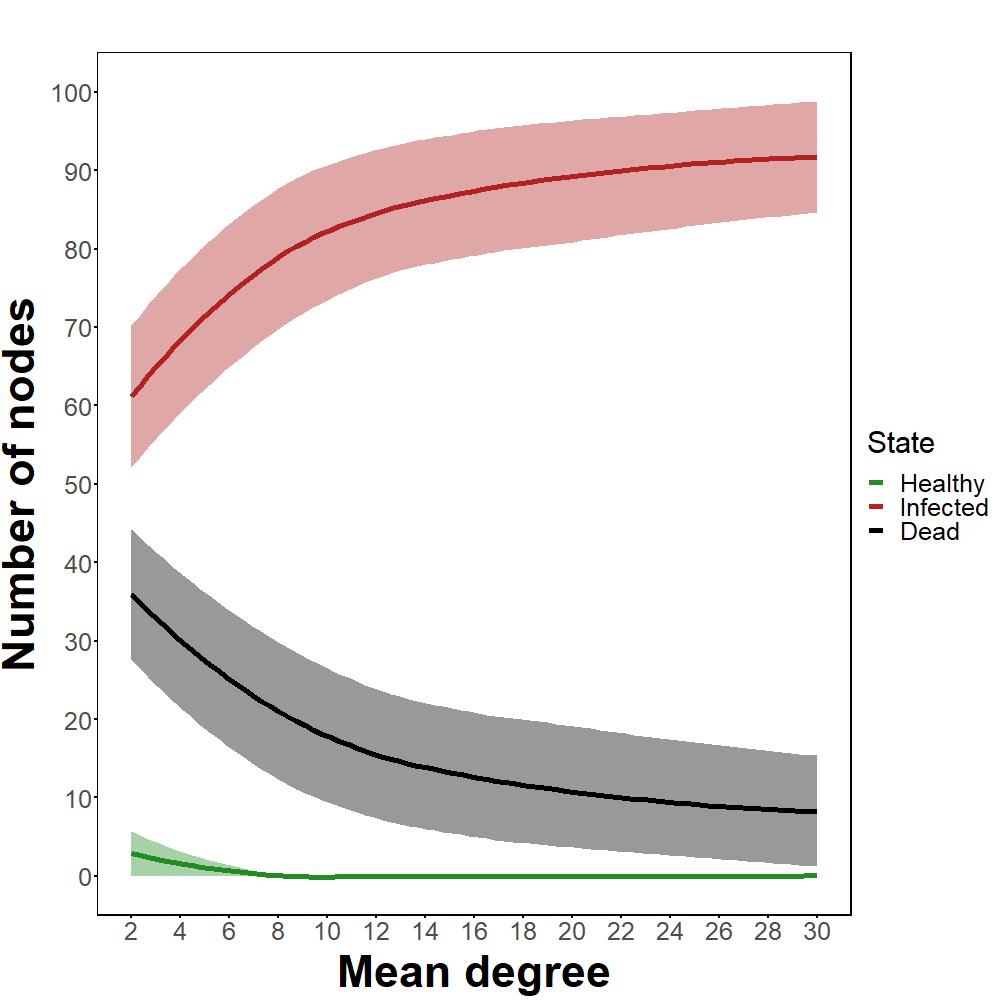}
\caption{\textbf{Small-world networks}. Here the fraction of healthy, infected, and dead nodes is represented as function of the mean degree of network samples of the WS model ---notice that the minimum possible average degree in this model is $\text{E}[k]=2$. Because model networks are connected, we observe no maximum in the number of dead nodes. We used a rewiring probability $\phi=0.01$ to generate WS model networks. Parameter values for $r$, $K$, and $\delta$ are the same as in Fig.~\ref{fig:fig1}. We averaged over $1200$ realizations.}
\label{fig:C1}
\end{figure}

Figure~\ref{fig:C1} summarizes simulation results for our epidemic spreading dynamics on to of the WS model for small-world network structure (compare these results with those reported in Fig.~\ref{fig:fig1}C). As the WS model generates connected networks, infection spreads throughout the network for any value of the mean degree. This explains that no maximum is observed in plant mortality.

\bibliography{pnas-sample}

\end{document}